\def\thefigure{\thesection.\@arabic\c@figure}
\def\fps@figure{h, t}
\def\thetable{\thesection.\@arabic\c@table}
\def\fps@table{h, t}
\newtheorem{theorem}{Theorem}
\newtheorem{lemma}[theorem]{Lemma}
\newtheorem{remark}[theorem]{Remark}
\numberwithin{theorem}{section}
\newenvironment{proof}[1][Proof]{\textbf{#1.} }{\ \rule{0.5em}{0.5em}}
\def\be{\begin{equation}}
\def\ee{\end{equation}}
\def\bea{\begin{eqnarray}}
\def\eea{\end{eqnarray}}
\def\ba{\begin{array}}
\def\ea{\end{array}}
\def\bOm{\boldsymbol{\Omega}}
\newcommand{\rem}[1]{}
\newcommand{\de}{\delta}
\newcommand{\bv}{\boldsymbol{v}}
\newcommand{\bGam}{\boldsymbol{\Gamma}}
\newcommand{\bom}{\boldsymbol{\omega}}
\newcommand{\bgam}{\boldsymbol{\gamma}}
\newcommand{\bsigma}{\boldsymbol{\Sigma}}
\newcommand{\bpsi}{\boldsymbol{\Psi}}
\newcommand{\bchi}{\boldsymbol{\chi}} 
\newcommand{\pp}[2]{\frac{\partial #1}{\partial #2}}
\newcommand{\dede}[2]{\frac{\delta #1}{\delta #2}}
\newcommand{\prt}{\partial}
\newcommand{\lp}{\left(}
\newcommand{\rp}{\right)}
\newcommand{\mse}{\mathfrak{se}}
\begin{document}

\markboth{R. Ivanov and V. Putkaradze}{Swirling fluid flow in elastic tubes}

\title{Swirling fluid flow in flexible, expandable elastic tubes: variational approach, reductions and integrability}
\author[1]{Rossen Ivanov} 

\affil[1]{School of Mathematical Sciences, Technological University Dublin, City Campus, Kevin Street,
Dublin D08 NF82,
Ireland; 
rossen.ivanov@dit.ie }

\author[2]{Vakhtang Putkaradze$^*$ } 

\affil[2]{Department of Mathematical and Statistical Sciences
632 CAB University of Alberta
Edmonton, AB,  T6G 2G1 Canada;  
putkarad@ualberta.ca; $^*$(corresponding author). 
} 


\maketitle

\begin{abstract}
Many engineering and physiological applications deal with situations when a fluid is moving in flexible tubes with elastic walls. In real-life applications like blood flow, a swirl in the fluid often plays an important role, presenting an additional complexity not described by previous theoretical models. We present a theory for the dynamics of the interaction between elastic tubes and swirling fluid flow.  The equations are derived using a variational principle, with the incompressibility constraint of the fluid giving rise to a pressure-like term. In order to connect this work with the previous literature, we consider the case of inextensible and unshearable tube with a straight centerline. In the absence of vorticity, our model reduces to previous models considered in the literature, yielding the equations of conservation of fluid momentum, wall momentum and the fluid volume. We pay special attention to the case when the vorticity is present but kept at a constant value. We show the conservation of energy-like quality and find an additional momentum-like conserved quantity. Next, we develop an alternative formulation, reducing the system of three conservation equations to a single compact equation for the back-to-labels map. That single equation shows interesting instability in solutions when the velocity exceeds a critical value. Furthermore, the equation in stable regime can be reduced to Boussinesq-type, KdV and Monge-Amp\`ere equations in several appropriate limits, namely, the first two in the limit of a long time and length scales and the third one in the additional limit of the small cross-sectional area. For the unstable regime, the numerical solutions demonstrate the spontaneous appearance of large oscillations in the cross-sectional area. 
\\[1mm] 
\emph{Keywords:} Variational methods, Fluid-Structure Interaction, Collapsible tubes carrying fluid, Spiral blood flow, Boussinesq equation, Monge-Amp\`ere equation
\\[1mm] 
\emph{Declarations of interest}: none. 
\end{abstract}
\tableofcontents

\section{Introduction}
Models of flexible tubes conveying fluid have been under investigation for many decades, because of their importance to industrial and biomedical applications. Broadly speaking, the research can be put in two different categories. The first category of papers focused on the instability of centerline of the tubes due to the complex interaction of the fluid inside the tube and the tube's elasticity, most often treated as an elastic rod. For such systems, an instability appears when the flow rate through the tube exceeds a certain critical value, sometimes referred to as the \emph{garden hose instability}.  While this phenomenon has been known for a very long time, the quantitative research in the field started around 1950 \cite{AsHa1950}. We believe that Benjamin \cite{Be1961a,Be1961b} was the first to formulate a quantitative theory for the 2D dynamics  of the {\it initially straight tubes} by considering a linked chain of tubes conveying fluids and using an augmented Hamilton principle of critical action that takes into account the momentum of the jet leaving the tube.  A continuum equation for the linear disturbances was then derived as the limit of the discrete system. Interestingly, Benjamin's approach is the inverse to the derivation of variational integrators for continuum equations \cite{Ve1988,MoVe1991,MaWe2001} This linearized equation for the initially straight tubes was further considered by Gregory and Pa\"idoussis \cite{GrPa1966a}. 

These initial developments formed the basis for further stability analysis of this problem for finite, initially straight tubes \cite{Pa1970,PaIs1974,Pa1998,ShMi2001,DoLa2002,PaLi1993,Pa2004,AkIvKoNe2013,AkGeNe2015,AkGeNe2016}. The linear stability theory has shown a reasonable agreement with experimentally observed onset of the instability \cite{GrPa1966b,Pa1998,KuSa2005,CaCr2009,Cr-etal-2012}. Nonlinear deflection models  were also considered in \cite{SeLiPa1994,Pa2004,MoPa2009,GaPaAm2013}, and the compressible (acoustic) effects in the flowing fluid in \cite{Zh2008}. Alternatively,  a more detailed 3D theory of motion was developed in \cite{BeGoTa2010} and extended in \cite{RiPe2015}, based on a modification of the Cosserat rod treatment for the description of elastic dynamics of the tube, while keeping the cross-section of the tube constant and orthogonal to the centerline. In particular, \cite{RiPe2015} analyzes several non-straight configurations, such as tube hanging under the influence of gravity,  both from the point of view of linear stability and nonlinear behavior. Unfortunately, this Cosserat-based theory could not easily incorporate the effects of the cross-sectional changes in the dynamics. Some authors have treated the instability from the point of view of the {\it follower force approach}, which treats the system as an elastic beam, ignoring the fluid motion,  with a force that is always tangent to the end of the tube. Such a force models the effect of the jet leaving the nozzle \cite{BoRoSa2002}, showing interesting dynamics of solutions. However, once the length of the tube becomes large, the validity of the follower force approach has been questioned, see  \cite{El2005} for a lively and thorough discussion.
For the history of the development of this problem in the Soviet/Russian literature, we refer the reader to the monograph \cite{Sv1987} (still only available in Russian). To briefly touch upon the developments in Russian literature that have been published in parallel with their western counterparts, and perhaps less known in the west, we refer the reader to the selection of papers, with most of them now available in English  \cite{Mo1965,Mu1965,Il1969,Anni1970,VoGr1973,Sv1978,Do1979,Ch1984,So2005,AmAl2015}. 

It is also worth noting the developments in the theory of initially curved pipes conveying fluid, which has been considered in  in some detail in the literature. The equations of motion for such theory were initially derived using the balance of elastic forces from tube's deformation and  fluid forces acting on the tube when the fluid is moving along a curved line in space.
In the western literature, we shall mention the earlier work \cite{Ch1972}, followed with more detailed studies \cite{MiPaVa1988a,MiPaVa1988b,DuRo1992} which developed the theory suited for both extensible and inextensible tubes and discussed the finite-element method realization of the problem. We shall also mention \cite{DoMo1976,AiGi1990} deriving a variational approach for the planar motions of initially circular tubes, although the effect of curved fluid motion was still introduced as extra forces through the Lagrange-d'Alembert principle.  In the Soviet/Russian literature,   \cite{Sv1978} developed the rod-based theory of oscillations and \cite{So2005} considered an improved treatment of forces acting on the tubes. Most of the work has been geared towards the understanding of the planar cases with in-plane vibrations as the simplest and most practically relevant situations (still, however, leading to quite complex formulas). 

In spite of substantial amount of literature in previous work in the area, many questions remain difficult to answer using the traditional approach.  Most importantly, it is very difficult (and perhaps impossible)  to extend the previous theory to accurately take into account the changes in the cross-sectional area of the tube, also called the collapsible tube case. In many previous works, the effects of cross-sectional changes have  been considered through the quasi-static approximation: if $A(s,t)$ is the local cross-section area, and $u(s,t)$ is the local velocity of the fluid, with $s$ being the coordinate along the tube and $t$ the time,  then the quasi-static assumption states that  $uA=$ const, see, \emph{e.g.}, \cite{SeLiPa1994,GhPaAm2013}. Unfortunately, this simple law is not correct in general and should  only be used for steady flows. This problem has been solved in the fully variational derivation \cite{FGBPu2014,FGBPu2015}, where a geometrically exact setting for dealing with a variable cross-section depending on tube's deformation was developed and studied, showing the important effects of the cross-sectional changes on both linear and nonlinear dynamics. The nonlinear theory was derived from a variational principle in a rigorous geometric setting and for general Lagrangians. It can incorporate general boundary conditions and arbitrary deviations from equilibrium in the three-dimensional space.  From a mathematical point of view, the Lagrangian description of these systems involves both left-invariant (elastic) and right-invariant (fluid) quantities.  This method has since been applied to the linear stability of initially curved (helical) tubes in \cite{FGBGePu2018}, where it was shown that the geometric method of 
\cite{FGBPu2014,FGBPu2015} is guaranteed to yield an equation with constant coefficients for the linear stability analysis for any helical initial state. 
This theory further allowed consistent variational approximations of the solutions, both from the point of view of deriving simplified reduced models and developing structure preserving numerical schemes \cite{FGBPu2016}, echoing the original development of the equations by Benjamin \cite{Be1961a,Be1961b} for constant cross-section. 

The second category of papers concerns with the flow of fluid through tubes which can dynamically change its cross-section. In these works, the cross-section, or its radius, in the case of circular cross-sections, serves as an additional dynamic variable. The equations of motion then involve the balance of fluid momentum, wall momentum and conservation law for the fluid. The theory for such tubes is indispensable in biomedical applications, such as arterial flows, and the key initial progress in theoretical understanding is based on balancing the conservation laws in a collapsible tube, see \cite{Pe1980}.  Further work in this field explored applications to arterial \cite{Je1990,PeLu1998,LuPe1998,QuTuVe2000,FoLaQu2003,StWaJe2009,Tang-etal-2009} and lung flows \cite{ElKaSh1989,MaFl2010,Do2016}. Analytical studies for such flows are usually limited to cases when the centerline of the tube is straight. In addition, for analytical progress, one needs to further assume that material  particles on the wall can only move normally to the centerline during the dynamics (see, however, \cite{BuCaGlTaQu2013} for non-trivial longitudinal  displacement). We should also mention the  instability through the neck formation and self-sustaining flow pulsations suggested by \cite{Pe1992} and experimentally measured in \cite{KuMa1999}, the treatment of forward and backward running waves in arteries \cite{PaJo1990,JoPaHuSh1992} and the treatment of poroelastic walls \cite{Bu2016}.   While substantial progress in the analysis of the flow has been achieved so far, it was difficult to describe analytically the general dynamics of 3D deformations of the tube involving \emph{e.g.} the combination of shear, transversal deformation, and extensions. A lively discussion of the dichotomy between the needs of numerical solutions of Navier-Stokes equations in realistic geometries, and the importance of theoretical developments based on simplified models can be found in \cite{Pe2003}. 
 For more informations about the application of collapsible tubes to biological flows, we refer the reader to the reviews \cite{GrJe2004,HeHa2011,Se2016}. 
 
 A fully variational theory of motion of tubes with expandable walls in three dimensions was derived in \cite{FGBPu2019}, both for compressible and incompressible flow motion inside the tube. The major difference with the earlier models described by the variational approach \cite{FGBPu2014,FGBPu2015} was the ability to treat compressible flows, and incorporate the coupled dynamics of the walls, the tube and the fluid. This model can treat arbitrary motion of the tubes in three dimensions, using arbitrary elasticities and mass distribution of the walls. For compressible flows, the model is also capable to incorporate the motion of shock waves in a moving tubes with expandable walls. It was shown that all simplified analytical models of flows with expandable walls can be obtained as particular cases of the more generalized model, with the introduction of appropriate friction terms. As far as we are aware, there is no way to achieve such generality of motion with non-variational models. Indeed, one can guess the right terms in simplified geometries  using force balance, but trying to balance forces in more complex problems, such as deforming centerline and expandable walls, is fraught with possibilities for errors. In contrast, variational methods provide a systematic way of deriving the equations and their subsequent analysis, such as Hamiltonian formulation and the existence of Poisson brackets. We thus believe that the the problems in fluid-structure interactions can benefit greatly from the application of modern variational methods and geometric mechanics. 
 
Most of the analytical aspects of previous works discussed above have treated the flow of fluid in a simplified manner, treating one characteristic streamwise velocity for each point of the centerline. However,  the fluid flow regimes with vorticity are frequently encountered around stents in arteries, have been observed both experimentally and numerically. They are also known as 'spiral flows', or 'spiral laminar flows', and are believed to play a crucial role in arterial dynamics. The study of these flows with applications to blood flow dynamics started with the pioneering observation by Stonebridge \& Drophy \cite{StBr1991} and continued by many authors \cite{StHoAlBe1996spiral,MiSiUd2001,Gr-etal-2005,PaLa2009,BiHuGa2011,Re-etal-2014}.  Numerical studies of vorticity transport in complex geometries relevant to arteries can be found in \emph{e.g.} \cite{DoShFrPe2002}, but as far as we are aware, there has been no theoretical studies including the wall dynamics, fluid dynamics and vorticity, especially from the variational point of view. The goal of this paper is precisely to fill this theoretical gap. More precisely, we shall derive a fully three-dimensional flow of mean vorticity transport based on Kelvin's theorem of circulation in fluid mechanics. We will mostly focus on the variational derivation, and thus most of our considerations will be dedicated to the inviscid models. The  friction terms essential for blood flow applications can be added later as they have to be incorporated in the variational principle using the Lagrange-d'Alembert's method of external forces. Our model will be valid for arbitrary three-dimensional deformations of the centerline and arbitrary elasticity laws satisfied by the wall.  As we show, the simplified mathematical treatment for the flow can produce interesting mechanical effects and instability caused by the interaction of wall and fluid. As far as practical applications of the theory presented here, we view it as a relatively complete model of industrial flows in flexible and expandable tubes with high vorticity, and a first step towards understanding the complete structure of inertial terms in arterial models, to which terms containing the internal viscous friction and wall shear can later be added. 

A major role in this manuscript will be played by the back-to-label Lagrangian map for the fluid part, which maps the location of fluid particle at a given point on the tube at the time $t$ to its original location at $t=0$. The back-to-labels map has an important role in the development of continuum mechanics. Of particular value to our studies here is the Clebsch representation of back-to-labels map which was used to derive variational and Hamiltonian structure of continuum mechanics. One of the pioneering examples of this approach was developed in \cite{HoKu1983}, where Poisson brackets and Clebsch representation for fluid mechanics, magnetohydrodynamics (MHD), and elasticity theory were derived. Nonlinear stability theory of fluid flows using back-to-labels maps was investigated in 
\cite{Holm-etal-1985,AbHo1987}, and applications to quasi-geostrophic flows was studied in \cite{AlHo1996}. A multi-symplectic formulation of fluid mechanics for ideal fluids using back-to-labels maps was further derived in \cite{CoHoHy2007}. In these papers, variational derivatives with respect to back-to-labels map was crucial in obtaining the laws of motion and proving the Hamiltonian structure of the flow. Because of the features of 3D hydrodynamics, the expression of the incompressibility condition in terms of back-to-labels map is rather awkward, and thus the works cited here used variations with constrained Lagrangian and needed the pressure as the corresponding Lagrange multiplier. Our  problem  has the advantage that one can explicitly connect the cross-sectional area and the spatial derivatives of back-to-labels map, and substitute into the Lagrangian directly, leading to the unconstrained Euler-Lagrange equations.

\paragraph{Structure of the paper and main results} 
The paper is structured as follows. In Section~\ref{sec:general}, we derive the full three-dimensional equations of motion for flexible tubes carrying incompressible fluid which has streamwise velocity and and vorticity. We have tried to keep the derivation self-contained and pedagogical, and build up our theory from the  theory of exact geometric rods with no fluid, to the theory of fluid-conveying tubes with expandable walls, introducing the swirling motion of the fluid in the process. The fluid volume incompressibility condition leads to the appearance of a pressure-like term in the equations of motion. In Section~\ref{sec:1D_tubes}, we consider the case of the inextensible and unshearable tube, and a static straight centerline when the walls can only move normal to the centerline, which is the case often considered in the literature for theoretical studies.  The equations of motion then reduce to a generalization of the equations familiar from the previous literature, with four equations of motion describing the conservation of the fluid and wall momenta and the fluid's mass, and the transport of vorticity (or, more precisely, velocity circulation), with the corrections due to the vorticity terms. These equations explicitly contain the fluid pressure as the Lagrange multiplier for incompressibility. We show the conservation of energy for these equations and also derive a new constant of motion linear in momenta in \eqref{int_gen}, that seems to have not been observed in the previous works on the subject. By using an alternative derivation using the Lagrangian back-to-labels map, we show how to reduce this equation to a single nonlinear scalar equation of motion in Section~\ref{sec:single_eq}. In the linearized form, the equation shows interesting stability properties, in particular, instability for a sufficiently large velocity in the tube. We perform the numerical studies of the solution in both linearly stable and unstable regimes. For the linearly stable regime, in Section~\ref{sec:approximate_sols} we utilize the slow-time and large-wavelength approximation to first develop a reduction to a Monge-Amp\'ere's equation, then a Boussinesq-like equation and finally the KdV equation. We compare the numerics of the approximate models with the full equations and discuss the relevance of KdV approximations for the left- and right-running waves. 

\section{Exact geometric theory for flexible tubes conveying incompressible fluid}\label{sec:general}

In this section, we first quickly review the Lagrangian variational formulation for geometrically exact rods without fluid motion as the foundation of the theory. Then, we extend this variational formulation to incorporate the motion of the incompressible fluid inside the tube and the motion of the wall, and finish by introducing the vorticity. To achieve this goal we need to first identify the configuration manifold of the system (which is infinite-dimensional), as well as the convective and spatial variables for the tube and the fluid. 
Our model is then obtained by an application of the Hamilton principle, reformulated in convective variables for the tube and in spatial variables for the fluid. For rigorous justification of the variational approach to fluid-structure interactions see \cite{FGBPu2015,FGBPu2019}.

\subsection{Background for geometrically exact rod theory}

Here we briefly review the theory of geometrically exact rods following the approach developed, on the Hamiltonian side, in \cite{SiMaKr1988} and subsequently in the Lagrangian framework more appropriate to this article in \cite{HoPu2009,ElGBHoPuRa2010}. A more comprehensive introduction is contained in \cite{FGBPu2015,FGBGePu2018} to which we refer the reader for the details. The purely elastic (\emph{i.e.}, rods carrying no fluid) geometrically exact theory is equivalent to the Cosserat's rods \cite{CoCo1909}, the equivalence of these approaches was shown in \cite{SiMaKr1988,ElGBHoPuRa2010}.

The configuration of the rod deforming in the ambient space $ \mathbb{R}  ^3 $ is defined by specifying the position of its line of centroids by means of a map $\mathbf{r} (t,s)\in \mathbb{R}^3$, and by giving the orientation of the cross-section at that point. Here $t$ is the time and $s\in [0,L]$ is a parameter along the strand that does not need to be arclength. The orientation of the cross-section is given by a moving basis $\{ \mathbf{e} _i(t,s) \mid i=1,2,3\}$ attached to the cross section relative to a fixed frame $\{ \mathbf{E} _i \mid i=1,2,3\}$. The moving basis is described by an orthogonal transformation $ \Lambda (t,s) \in SO(3)$ such that $ \mathbf{e} _i (t,s)= \Lambda (t,s) \mathbf{E} _i $.
We interpret the maps $ \Lambda (t,s)$ and $\mathbf{r}(t,s)$ as a curve $t\mapsto (\Lambda(t),\mathbf{r}(t))\in G$ in the infinite dimensional Lie group $G= \mathcal{F} ([0,L], SO(3) \times \mathbb{R} ^3  )$ of $SO(3) \times \mathbb{R} ^3$-valued smooth maps defined on $[0,L]$  \footnote{The notation $ \mathcal{F} ([0,L], V)$ refers to the set of functions defined on the interval $s \in [0,L]$ taking the values  in the set $V$.}. The Lie group $G$ is the configuration manifold for the geometrically exact rod.

Following Hamilton's principle, given a Lagrangian function
\[
\mathsf{L}=\mathsf{L}( \Lambda , \dot {\Lambda }, \mathbf{r} , \dot{\mathbf{r}}):TG \rightarrow \mathbb{R},
\]
defined on the tangent bundle $TG$ of the configuration Lie group $G$ defined above, the equations of motion are the Euler-Lagrange equations obtained by the critical action principle
\begin{equation}\label{HP_rod} 
\delta \int_0^T\mathsf{L}( \Lambda , \dot {\Lambda }, \mathbf{r} , \dot{\mathbf{r}})\mbox{d}t=0,
\end{equation} 
for arbitrary variations $\delta\Lambda$ and $\delta  \mathbf{r} $ vanishing at $t=0,T$.
It turns out that the Lagrangian of geometrically exact rods can be exclusively expressed in terms of the convective variables 
\begin{equation}\label{def_conv_var}
\begin{aligned}
&\bgam=\Lambda^{-1} \dot{\mathbf{r} }\,,  &\quad 
&\bom=\Lambda^{-1} \dot\Lambda\,,
\\ 
&\bGam=\Lambda^{-1} \mathbf{r} '\,, &\quad
&\bOm=\Lambda^{-1} \Lambda'\,,
\end{aligned} 
\end{equation} 
see \cite{SiMaKr1988}, where $ \boldsymbol{\gamma} (t), \boldsymbol{\omega} (t)\in \mathcal{F} ([0,L], \mathbb{R}  ^3 )$ are the linear and angular convective velocities and $ \boldsymbol{\Gamma} (t), \boldsymbol{\Omega}(t)\in \mathcal{F} ([0,L], \mathbb{R}  ^3 )$ are the linear and angular convective strains. This gives rise to a Lagrangian $\ell=\ell( \boldsymbol{\omega} , \boldsymbol{\gamma} , \boldsymbol{\Omega} , \boldsymbol{\Gamma} ): \mathcal{F}([0,L], \mathbb{R}  ^3 )^4  \rightarrow \mathbb{R}  $ written exclusively in terms of convective variables. Here, we treat all four  variables as infinite-dimensional functions of $t$, which are themselves $\mathbb{R}^3$-valued functions of $s$ for any fixed $t$, see \cite{ElGBHoPuRa2010} for the detailed mathematical exposition of the method.  For the moment, we leave the Lagrangian function unspecified, we will give its explicit expression later in \S\ref{Def_Lagrangian} for the case of fluid-conveying tubes.

The equations of motion in convective description are obtained by writing the critical action principle \eqref{HP_rod} in terms of the Lagrangian $\ell$. This is accomplished by computing the constrained variations of $\boldsymbol{\omega} , \boldsymbol{\gamma} , \boldsymbol{\Omega} , \boldsymbol{\Gamma} $ induced by the free variations $ \delta \Lambda , \delta\mathbf{r} $ via the definitions \eqref{def_conv_var}. 
We find
\begin{align} 
& \delta \boldsymbol{\omega} = \frac{\partial \boldsymbol{\Sigma} }{\partial t} +\boldsymbol{\omega} \times \boldsymbol{\Sigma} , \qquad \delta \boldsymbol{\gamma} = \frac{\partial \boldsymbol{\Psi} }{\partial t} + \boldsymbol{\gamma} \times \boldsymbol{\Sigma} + \boldsymbol{\omega} \times \boldsymbol{\Psi} \, , 
\label{delta1} 
\\
& \delta \boldsymbol{\Omega} = \frac{\partial \boldsymbol{\Sigma} }{\partial s} +\boldsymbol{\Omega} \times \boldsymbol{\Sigma} , \qquad \delta \boldsymbol{\Gamma} = \frac{\partial \boldsymbol{\Psi} }{\partial s} + \boldsymbol{\Gamma} \times \boldsymbol{\Sigma} + \boldsymbol{\Omega} \times \boldsymbol{\Psi},
\label{delta2} 
\end{align} 
where $ \bsigma(t,s)=\Lambda(t,s)^{-1} \de \Lambda(t,s)\in \mathbb{R}  ^3 $ and $\bpsi (t,s)= \Lambda(t,s)^{-1} \de \mathbf{r} (t,s) \in \mathbb{R}  ^3 $ are arbitrary functions vanishing at $t=0,T$.
Hamilton's principle \eqref{HP_rod} induces the variational principle
\begin{equation}\label{Reduced_HP_rod} 
\delta \int_0^T\ell( \boldsymbol{\omega} , \boldsymbol{\gamma} , \boldsymbol{\Omega} , \boldsymbol{\Gamma} )\mbox{d}t=0,
\end{equation} with respect to the constrained variations \eqref{delta1}, \eqref{delta2}, which in turn yields the reduced Euler-Lagrange equations
\begin{equation}
\left\lbrace\begin{array}{l}
\displaystyle\lp \prt_t + \bom\times\rp\dede{\ell}{\bom}+\bgam\times\dede{\ell}{\bgam} +\lp\prt_s + \bOm\times\rp\dede{\ell}{\bOm} +\bGam\times \dede{\ell}{\bGam} =0\\
\displaystyle\lp \prt_t + \bom\times\rp\dede{\ell}{\bgam} + \lp\prt_s + \bOm\times\rp  \dede{\ell}{\bGam}=0,
\end{array}\right.
\end{equation} 
together with the boundary conditions
\begin{equation}\label{BC_rod} 
\left.\frac{\delta \ell}{\delta \boldsymbol{\Omega} }\right |_{s=0,L}=0 \, , \quad \left.\frac{\delta \ell}{\delta \boldsymbol{\Gamma } }\right |_{s=0,L}=0.
\end{equation} 
If one of the extremity (say $s=0$) of the rod is kept fixed, \emph{i.e.},  $ \mathbf{r} (t,0)=\mathbf{r} _0 $, $ \Lambda (t,0)= \Lambda _0 $ for all $t$, then only the boundary condition at $s=L$ arises above.

From their definition \eqref{def_conv_var}, the convective variables verify the compatibility conditions
\begin{equation}\label{compat_rod} 
 \partial _t \boldsymbol{\Omega} = \boldsymbol{\omega} \times \boldsymbol{\Omega} +\partial _s  \boldsymbol{\omega}\quad\text{and}\quad  \partial _t \boldsymbol{\Gamma} + \boldsymbol{\omega} \times \boldsymbol{\Gamma} = \partial _s \boldsymbol{\gamma} + \boldsymbol{\Omega} \times \boldsymbol{\gamma}.
\end{equation}

\begin{remark}[Lagrangian reduction by symmetry]\rm  The process of passing from the Lagrangian (or material) representation in terms of $ (\Lambda , \dot{ \Lambda }, \mathbf{r} , \dot{ \mathbf{r} })$ with variational principle \eqref{HP_rod} to the convective representation in terms of $(\boldsymbol{\omega} , \boldsymbol{\gamma} , \boldsymbol{\Omega} , \boldsymbol{\Gamma})$ with constrained variational principle \eqref{Reduced_HP_rod} can be understood via a Lagrangian reduction process by symmetries. It has been carried out in \cite{ElGBHoPuRa2010} and is based on the affine Euler-Poincar\'e reduction theory of \cite{GBRa2009}.  
\end{remark}

\begin{remark}[On the functional form of the Lagrangian] \rm Note that in all our theoretical considerations we will keep the Lagrangian in the general form, as we are interested in the symmetry-reduction approach to the fully three dimensional problem, rather than in the derivation of the equations of motion in a particular reduced setting,  \emph{e.g.}, restricted to two dimensions, straight line, \emph{etc}. We believe that such an approach based on Lagrangian mechanics yields the simplest possible treatment of the elastic, three-dimensional deformation of the tube. 
\end{remark}

\subsection{Definition of the configuration space for the tube with expandable walls conveying fluid}

We now incorporate the motion of the fluid inside the tube, the vorticity and the  motion of the wall of the tube  by  extending  the geometrically exact framework.
Recall that the geometrically exact rod (without fluid) consists of a left invariant system and, therefore, can be written in terms of convective variables. On the other hand, the fluid is a right invariant system, naturally written in terms of spatial variables.
The coupling of these two systems therefore yields the interesting combined fluid-structure interaction involving both convective and spatial variables but whose left and right invariances are broken by the coupling constraint. Without the vorticity, our derivation in this chapter is equivalent to \cite{FGBPu2019}, see also \cite{FGBPu2019a} for a more detailed and pedagogical exposition. 

In addition to the rod variables $(\Lambda , \mathbf{r}) \in \mathcal{F} ([0,L], SO(3) \times \mathbb{R}  ^3 )$ considered above, the configuration manifold for the fluid-conveying tube also contains the Lagrangian description of the fluid. It is easier to start by defining the back-to-label map, which is an embedding $ \psi :[ 0,L] \rightarrow \mathbb{R}  $, assigning to a current fluid label particle $s \in [0,L]$ located at $ \mathbf{r} (s) $ in the tube, its Lagrangian label $s_0 \in \mathbb{R}  $. Its inverse $ \varphi := \psi ^{-1}$ 
gives the current configuration of the fluid in the tube. A time dependent curve of such maps thus describes the fluid motion in the tube, \emph{i.e.}, 
\[
s= \varphi (t,s_0), \quad s \in [0,L].
\]
We now include the motion of the wall of the tube, as a reaction to the fluid motion and pressure. In order to incorporate this effect in the simplest possible case of a tube with a circular cross-section, let us consider the tube radius $R(t,s)$ to be a free variable. In this case, the Lagrangian depends on  $R$, as well as on its time and space derivatives $\dot R$ and $R'$, respectively. Let us assume that $R$ can lie on an interval $I_R$, for example, $I_R=\mathbb{R}_+$ (the set of positive numbers). As we will see below, the presence of the swirl will add an additional scalar function to the variables of the Lagrangian, namely, the  circulation of fluid velocity measured at a given cross-section. 
Then, 
the configuration manifold for the fluid-conveying tube is given by the infinite dimensional manifold
\begin{equation}\label{config_garden_hose} 
\begin{aligned} 
\mathcal{Q} :=\mathcal{F}  &\left( [0,L], SO(3) \times \mathbb{R} ^3  \times I_R   \times \mathbb{R} \right) 
\\ 
& \times 
\left\{ \varphi : \varphi ^{-1} [0,L]  \rightarrow [0,L]\mid \text{$\varphi $ diffeomorphism} \right\}.
\end{aligned} 
\end{equation}
Note that the domain of definition of the fluid motion $s= \varphi (t,s_0)$ is time dependent, i.e., we have $ \varphi (t):[ a(t), b(t)] \rightarrow [0,L]$, for $\varphi (t, a(t))=0$ and $\varphi (t,b(t))=L$.
The time dependent interval $[a(t),b(t)]$ contains the labels of all the fluid particles that are present in the tube at time $t$.

\subsection{Definition of the Lagrangian}\label{Def_Lagrangian}

Let us now turn our attention to the derivation of the Lagrangian of the fluid-conveying geometrically exact tube with expandable walls. The Lagrangian will be computed as the sum of kinetic energy of rod and fluid and negative of the potential energy of the rod's deformation. 
\paragraph{Kinetic energy of the rod.} The kinetic energy of the elastic rod is the function $K_{\rm rod}$ given by
\[
K_{\rm rod}= \frac{1}{2} \int_0^L\left( \alpha | \bgam|^2 + a \dot{R}^2 + \mathbb{I}(R) \bom \cdot \bom \right)| \boldsymbol{\Gamma} | \mbox{d}s,
\] 
where $\alpha$ is the linear density of the tube and $\mathbb{I}(R)$ is the  local moment of inertia of the tube.  The term $ \frac{1}{2}a  \dot{R}^2$ describes the kinetic energy of the radial motion of the tube.

We now derive the total kinetic energy of the fluid. In material representation, the total velocity of the fluid particle with label $s_0$ is given by
\begin{equation}\label{velocity_equalities}
\begin{aligned} 
\frac{d}{dt} \mathbf{r}(t ,\varphi (t,s_0))&= \partial _t \mathbf{r} (t, \varphi (t,s_0))+ \partial _s \mathbf{r} (t, \varphi (t,s_0)) \partial _t \varphi (t,s_0)\\
&=\partial _t \mathbf{r} (t, \varphi (t,s_0))+ \partial _s \mathbf{r} (t, \varphi (t,s_0)) u(t,\varphi (t,s_0)),
\end{aligned}
\end{equation}  
where the Eulerian velocity is defined by
\begin{equation}\label{Eulerian_velocity_u}
u(t,s)=\left(  \partial _t \varphi \circ \varphi ^{-1} \right) (t,s), \quad s \in [0,L].
\end{equation}
Therefore, the  kinetic energy of the translational motion of the fluid reads
\[
K_{\rm fluid, transl}= \frac{1}{2} \int_{ \varphi ^{-1} (0,t)} ^{ \varphi ^{-1} (L,t)} \rho Q_0 (s_0) \left| \frac{d}{dt} \mathbf{r}(t ,\varphi (t,s_0))\right | ^2 \mbox{d}s_0,
\]
where  $Q_0(s_0) \mbox{d} s_0 $ is the initial infinitesimal volume of fluid.

\paragraph{Circulation of velocity and Kelvin's theorem} 
In order to go beyond the theory derived in \cite{FGBPu2019}, we introduce the swirling motion in the Lagrangian. We will introduce the simplest model of swirling. We assume that the shape of the tube at cross-section is approximately circular, and there is an axisymmetric swirling motion in addition to the streamline velocity $u(s,t)$. The trajectories of the fluid particles are locally helical, although they could be quite complex globally, \emph{i.e.} extended for long $t$.   

Take $C_\alpha$ to be a contour that circles a given cross-section close to the boundary at a Lagrangian point $\alpha$. Assuming that the deformations of the centerline remain small,  the cross-section roughly circular, and the viscosity can be neglected. the cross-section remains normal to the centerline. Under this assumption, $C(t)$ will always be close to the boundary of a cross-section at all other times $t$ at the point $s=\varphi(\alpha,t)$. 
For an inviscid incompressible fluid considered here, Kelvin's theorem states that the circulation $\lambda(s,t)$ is conserved along the flow, in other words, for a three-dimensional fluid velocity $\bv$, 
\begin{equation} 
\frac{D \lambda}{Dt} = (\partial_t + u \partial_s) \lambda =0 \, , \quad 
\lambda:= \int_{C(t)}  \bv \cdot \mbox{d} \mathbf{l} =0 \, , 
\label{Kelvin_thm_0} 
\end{equation} 
where $D/D t$ is the full derivative of the integral quantity. A more general version of Kelvin's theorem uses the fluid momentum instead of $ \bf$ \cite{HoSchSt2009}. 
In what follows we shall use $\lambda$ as an additional dynamic quantity in the Lagrangian to model the swirling motion of the flow. 
We thus obtain from \eqref{Kelvin_thm_0}  the following approximate conservation law for the swirling motion. 
\begin{equation} 
(\partial_t + u \partial_s) \lambda =0 
\quad \Leftrightarrow \quad 
\lambda= \lambda_0 \circ \varphi^{-1}(s,t) \, . 
\label{cons_law_lambda} 
\end{equation} 

\begin{remark}[On the description of swirling motion  through vorticity] 
\label{vorticity_remark} 
{\rm 
One could be tempted to introduce a 'typical' vorticity in the streamwise direction $\xi(s,t)$ instead of circulation $\lambda(s,t)$ introduced above. Then, the vorticity advection written in the local coordinates will state 
\begin{equation} 
\partial_t \xi + u \partial_s \xi - \xi \partial_s u=0 \, . 
\label{vorticity_ev} 
\end{equation} 
If the velocity field is assumed to be smooth in every cross-section, then we can write $\lambda \simeq \xi Q$. Differentiating $\lambda$ defined that way, we obtain exactly \eqref{cons_law_lambda}. 
\\ 
While this calculation is formally correct, it is difficult to justify it physically, since the swirling motion in inviscid fluids may have a singularity in the velocities at the centerline of the tube. Thus, the concept of a 'typical' vorticity for a cross-section is ill-defined even for the simplest problem of a perfectly cylindrical straight tube with the flow independent of $s$. In that case, for every cross-section, the problem reduces to the problem of the motion of a point vortex in a circle inviscid fluid, and a point vortex at the center with a given circulation value is a solution \cite{Sa1992,Ne2013}.  However, Kelvin's theorem is still valid,  even when the vorticity is infinite at a set of points, as long as the circulation of the vorticity is finite. Thus, we believe that the circulation $\lambda(s,t)$ defined by \eqref{Kelvin_thm_0} with the evolution equation given by \eqref{cons_law_lambda} is superior characterization of swirl as compared to the notion of a 'typical' vorticity. Note that physical cut-offs must be introduced to utilize the notion of the kinetic energy, as, technically speaking, the kinetic energy of the fluid driven by a point vortex is infinite. 
} 
\end{remark} 
The rotational component of kinetic energy can be then approximated by integrating the rotational component of velocity over a cross-sectional area, and then over the whole tube. Note that dimensionally, $\lambda^2 \simeq R^2 v_{\rm rot}^2$, so $\rho \lambda^2$ already has the \emph{dimensions } of the kinetic energy of rotation per unit length of the tube. The exact value of the kinetic energy depends on the velocity profile of the radial velocity as a function of $r$, namely, $v_r (r,s,t)$. In general, this velocity profile is a complex function of the tube's shape and the local vorticity profile of the fluid. We thus define the profile shape function for every cross-section $C_s$
\begin{equation} 
\Phi=\frac{2 \pi \rho \int_0^R r v_{\rm rot}^2 \mbox{d} r }{ \lambda(s,t)^2  } |\bGam| \,,  \quad 
\Phi=\Phi(\bOm,\bGam, R, R', \ldots). 
\end{equation} 

In general, the shape function $\Phi$ is dimensionless with values of order 1. It describes the deviation of the velocity profile from its given equilibrium value during the dynamics. The profile shape function $\Phi$ can only be obtained through either numerical simulations or experiments, and is thus a modeling component of the theory. Our equations of motion will be derived by the variational principle and will be valid for any $\Phi$. While the exact nature of $\Phi$ and its dependence on the flow is to be investigated in future work, although some conclusions can be drawn from simple considerations. 
For example, for a point vortex on the plane with circulation $\lambda$, the rotational velocity is $v_{\rm rot}= \frac{\lambda}{2 \pi r}$, so the kinetic energy of the fluid's rotation in a circle of radius $R$ is infinite. However, if we introduce a smoothing cut-off of radius $\epsilon$ around the vortex attributed to viscosity, we arrive to 
\begin{equation} 
\Phi= \frac{1}{\lambda^2} \int 2 \pi \rho \int_{\epsilon}^R r v_{\rm rot}^2 \mbox{d} r = \frac{1}{2 \pi} \log \frac{R}{\epsilon} 
\label{Phi_Vortex}
\end{equation}  
In the numerical simulations in this paper we will use \eqref{Phi_Vortex} for a fixed value of $\epsilon$. Then, 
\begin{equation} 
K_{\rm fluid, rot}=\frac{\pi}{2} \int \rho \Phi \lambda^2 \mbox{d}s \, . 
\label{KE_rot_fluid} 
\end{equation} 
Using \eqref{velocity_equalities} together with the change of variables $s= \varphi (t,s_0)$, we can rewrite $K_{\rm fluid}$ as
\[
K_{\rm fluid}= \frac{1}{2} \int_{0} ^{ L} \rho Q \left| \boldsymbol{\gamma} + \boldsymbol{\Gamma} u\right | ^2 + 
  \rho \Phi \lambda^2  \mbox{d}s,
\]
where $\rho$ is the mass density of the fluid per unit volume, in units Mass/Length$^3$, and $Q(t,s)$ is the area of the tube's cross section, in units Length$^2$.

\paragraph{Elastic energy.} The potential energy due to elastic deformation is a function of $\bOm$, $\bGam$ and $R$. While the equations will be derived for an arbitrary potential energy,  we shall assume the simplest possible quadratic expression for the calculations, namely,
\begin{equation}
E_{\rm rod}=\frac{1}{2}\int_0^L\Big( \mathbb{J}\bOm\! \cdot  \!\bOm
+ \lambda(R) |\bGam- \boldsymbol{\chi} |^2 +2F(R,R',R'')\Big)  |\bGam| \mbox{d} s\,,
\label{e_rod} 
\end{equation} 
where $ \boldsymbol{\chi} \in \mathbb{R} ^3 $ is a fixed vector denoting the axis of the tube in the reference configuration, $ \mathbb{J}$ is a symmetric positive definite $3 \times 3$ matrix, which may depend on $R$, $R'$ and $R''$, and $\lambda (R)$ is the stretching rigidity of the tube. 
The stretching term proportional to $\lambda(R)$ can take the more general form $\mathbb{K} (\bGam-  \boldsymbol{\chi} ) \cdot (\bGam-  \boldsymbol{\chi} )$, where $\mathbb{K}$ is a $3 \times 3$ tensor.
The part of this expression for the elastic energy containing the first two terms in \eqref{e_rod} is commonly used for a Cosserat elastic rod, but more general functions of deformations $\bGam $ are possible, see \cite{FGBGePu2018} for a more detailed discussion of possible forms of the potential energy. A particular case is a quadratic  function  of $\bGam$ leading to a linear dependence between stresses and strains. We have also introduced the elastic energy of wall $F(R,R',R'')$ which can be explicitly computed for simple elastic tubes. In general $F$ depends on higher derivatives, such as $R''$. We shall derive the equations of motion in their general form for arbitrary $F$, but for particular examples used in simulations we will use $F(R,R')$ for simplicity.

\paragraph{Mass conservation.} Before we write the final expression for the Lagrangian, let us discuss the question of the mass conservation since it will be used as a constraint to derive the equations of motion, with the appropriate Lagrange multiplier playing the role of fluid's pressure. We shall assume that the fluid fills the tube completely, and the fluid velocity at each given cross-section is aligned with the axis of the tube. 
Since we are assuming a one-dimensional approximation for the fluid motion inside the tube, the mass density per unit length $\rho Q=\rho A |\bGam|$, has to verify
\begin{equation}
\label{eq_xi} 
\rho Q= ( \rho Q _0 \circ \varphi ^{-1} )\partial _s \varphi ^{-1}.
\end{equation} 
Since $\rho=$const, and the fluid is incompressible, we deduce the conservation law
\begin{equation} 
 Q=  Q _0 \circ \varphi ^{-1} \partial _s \varphi ^{-1} \quad 
\Rightarrow \quad \partial _t Q +\partial _s( Q u)=0.
\label{Q_cons}
\end{equation} 
The origin of this conservation law is illustrated on the left panel of Figure~\ref{fig:cons_law_trajectories}. A Lagrangian point $a$ is mapped to $s=\varphi(a,t)$, and the Lagrangian point $a'$ is mapped to $s'=\varphi(a',t)$. Inversely, a  point 
on a centerline $s$ will be mapped to the initial Lagrangian point $a=\varphi^{-1}(s,t)$. Taking $a'=a+\mbox{d}a$, we see that the infinitesimal physical volume 
is described at the physical space by $Q \mbox{d} s$, and the corresponding volume at $t=0$ by $Q_0 (a) \partial_s \varphi^{-1} (s,t)$, with $a=\varphi^{-1}(s,t)$, giving exactly \eqref{Q_cons}. 

The key new feature of this Section is in the introduction of the swirling motion in the kinetic energy part of the Lagrangian. 
The particle trajectories for such a swirling motion are approximately helical, as illustrated by a sketch on the right panel of Figure~\ref{fig:cons_law_trajectories}. 
\begin{figure}[h]
\centering
\includegraphics[width=0.48\textwidth]{./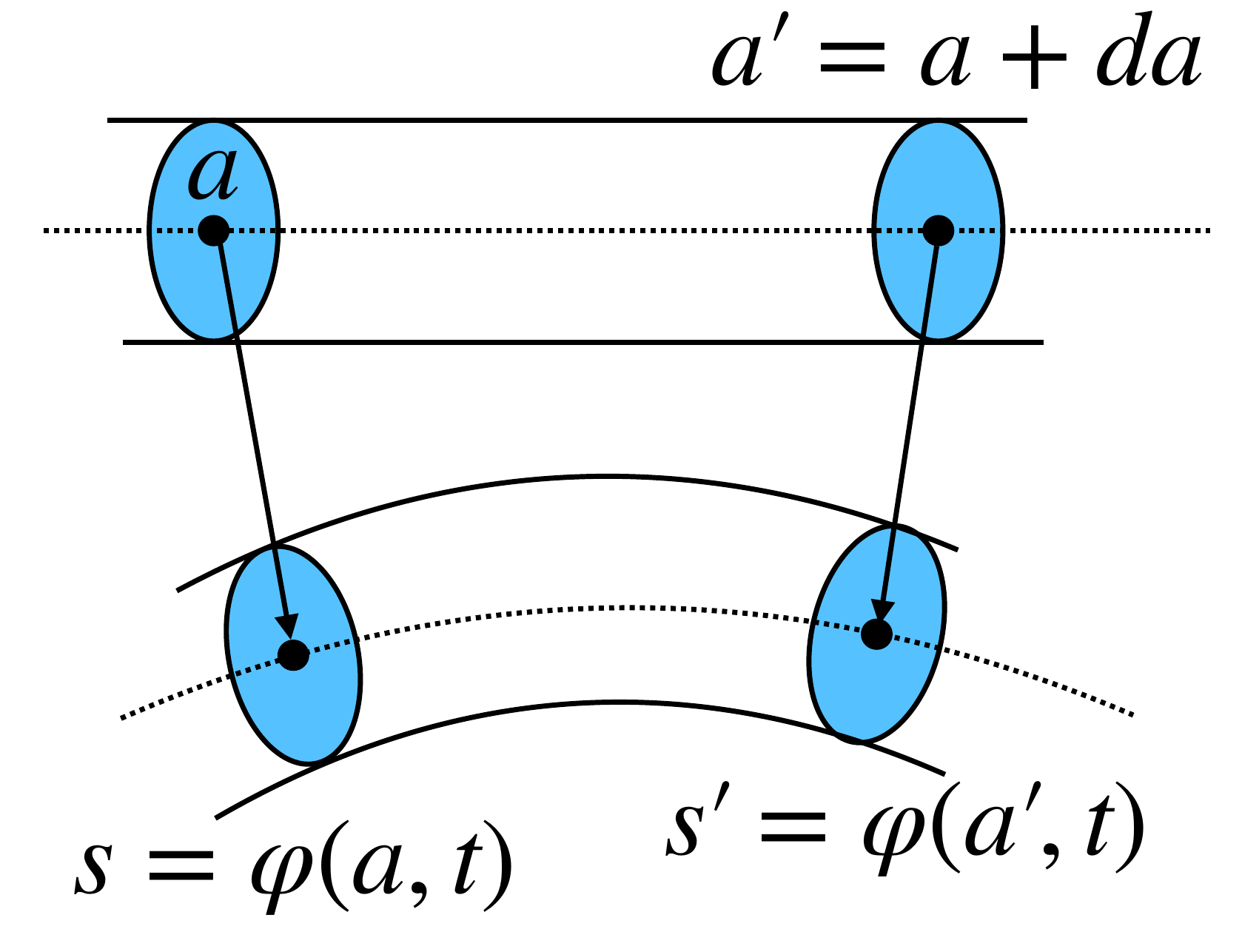}
\includegraphics[width=0.48\textwidth]{./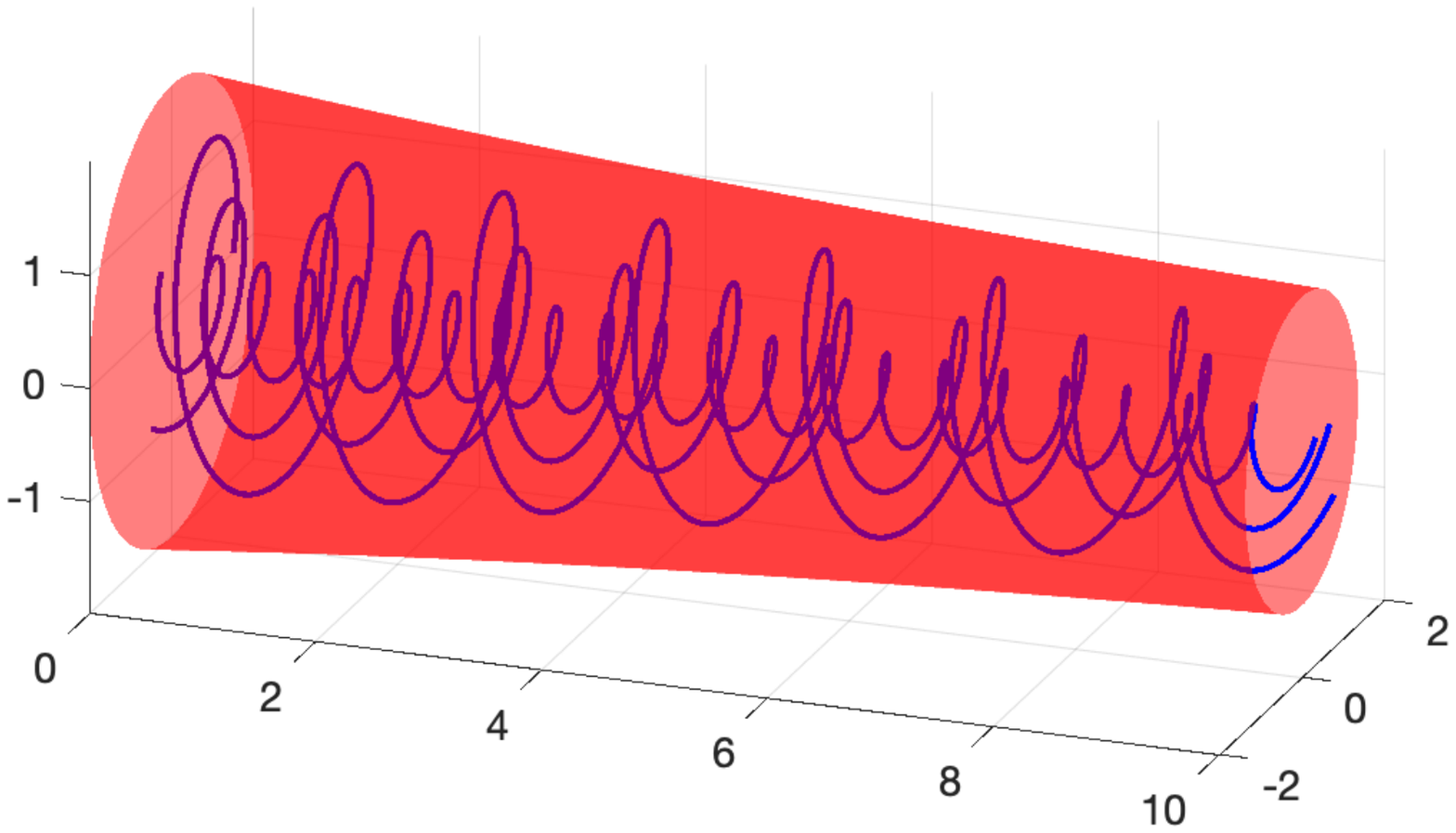}
\caption{Left panel: a sketch of conservation law \eqref{eq_xi}. Right panel: a tube (shown in red) and particle trajectories (shown in blue).  }
\label{fig:cons_law_trajectories}
\end{figure}

\paragraph{Lagrangian.} From all the expressions given above, we obtain the Lagrangian of the fluid-conveying given by
\begin{equation}\label{total_Lagrangian}
\mathsf{L}=\mathsf{L} \big(  \Lambda , \dot {\Lambda }, \mathbf{r} , \dot{\mathbf{r}}, \varphi , \dot{ \varphi },R,\dot{R}\big) :T\mathcal{Q}  \rightarrow \mathbb{R},\quad \mathsf{L}=K_{\rm rod}+K_{\rm fluid}-E_{\rm rod} - E_{\rm int},
\end{equation}
and defined on the tangent bundle $T \mathcal{Q} $ of the configuration space $ \mathcal{Q} $, see \eqref{config_garden_hose}. 
Note that all the arguments of $\mathsf{L}$ are functions of $s$, so we don't need to include the spatial derivatives of $(\Lambda, \mathbf{r}, R)$ explicitly as variables in $\mathsf{L}$. These spatial derivatives appear explicitly in the expression of the integrand of the reduced Lagrangian \eqref{Lagrangian_fluid_tube} below. Assuming there is a uniform external pressure $p_{\rm ext}$ acting on the tube, the Lagrangian expressed 
in terms of the variables $\boldsymbol{\omega} ,\boldsymbol{\gamma} , \boldsymbol{\Omega}, \boldsymbol{\Gamma} ,u,  Q, \lambda $ reads
\begin{equation}\label{Lagrangian_fluid_tube}
\begin{aligned}
&\ell  ( \boldsymbol{\omega} ,\boldsymbol{\gamma} , \boldsymbol{\Omega} ,  \boldsymbol{\Gamma} ,u, Q , \lambda, R,\dot{R}) 
\\
&=  \int_0^L \Big[\Big(\frac{1}{2} \alpha | \bgam|^2 +  \frac{1}{2}\mathbb{I}(R) \bom\! \cdot\! \bom + \frac{1}{2} a  \dot R^2 - F(R,R') - \frac{1}{2} \mathbb{J}  \bOm \!\cdot \! \bOm\\
& \qquad \qquad  - \frac{1}{2} S(R) |\bGam- \boldsymbol{\chi} |^2 \Big) | \bGam|
+ \frac{1}{2}\rho Q  \left| \boldsymbol{\gamma} + \boldsymbol{\Gamma} u\right | ^2+  \frac{1}{2} \rho \Phi \lambda^2 s   -p_{\rm ext} Q\Big] \mbox{d} s 
\\ 
&=: 
\int_0^L \Big[\ell_0( \boldsymbol{\omega} ,\boldsymbol{\gamma} , \boldsymbol{\Omega} ,  \boldsymbol{\Gamma} ,u,Q,R,\dot{R}, R',\lambda)  -p_{\rm ext} Q \Big]  \mbox{d} s \, .
\end{aligned}
\end{equation}
where $S(R)$ is the function describing the potential energy of tube's stretching along its centerline as a function of $R$. Our derivation of the equations of motion \eqref{system_ell} below will be valid for the Lagrangian $\ell$ being an arbitrary function of its variables. Of course, the numerical simulations will require specifying the exact functional form of the Lagrangian $\ell$ for every particular case. 

\subsection{Variational principle and equations of motion}

The equations of motion are obtained from the Hamilton principle applied to the Lagrangian \eqref{total_Lagrangian}, namely
\begin{equation}\label{HP_total} 
\delta \int_0^T\mathsf{L}( \Lambda , \dot {\Lambda }, \mathbf{r} , \dot{\mathbf{r}}, \varphi, \dot\varphi, \lambda, R, \dot R )\mbox{d}t=0, 
\end{equation} 
for arbitrary variations $\delta\Lambda, \delta\mathbf{r}, \delta\varphi, \delta R$ vanishing at $t=0,T$. In terms of the symmetry reduced Lagrangian $\ell$, this variational principle becomes
\begin{equation} 
\de  \int_0^T\ell ( \boldsymbol{\omega} ,\boldsymbol{\gamma} , \boldsymbol{\Omega} ,  \boldsymbol{\Gamma} ,u, \xi , \lambda, R,\dot{R}) \mbox{d} t =0 
\, , 
\label{min_action_gas} 
\end{equation} 
for variations \eqref{delta1}, \eqref{delta2}, and $\de u$, $\de \xi$ and $\de \lambda$ computed as
\begin{align} 
\de u & =  \partial_t\eta + u \partial_s\eta - \eta \partial _su  
\, \quad 
\delta \lambda= -\eta \partial _s \lambda,\label{delta_u_lam} \, , 
\quad 
\de (Q_0 \circ \varphi^{-1} \partial_s \varphi^{-1})  = - \partial_s (\eta Q ) 
\end{align} 
where $\eta = \de \varphi \circ \varphi^{-1}$. Note that $\eta(t,s)$ is an arbitrary function vanishing at $t=0,T$. Note that the Lagrangian function in \eqref{min_action_gas} will contain the incompressibility constraint \eqref{Q_cons} with the Lagrange multiplier $p$, which we drop from the notation in \eqref{min_action_gas} for simplicity. A lengthy computation (see \cite{FGBPu2019} for details) yields the system 
\begin{equation}\label{system_ell} 
\left\lbrace\begin{array}{l}
\displaystyle\vspace{0.2cm}\frac{D}{Dt} \dede{\ell}{\bom}+\bgam\times\dede{\ell}{\bgam} +\frac{D}{Ds}  \dede{\ell}{\bOm}   +\bGam\times \dede{\ell}{\bGam}=0\\
\displaystyle\vspace{0.2cm}\frac{D}{Dt} \dede{\ell}{\bgam} +\frac{D}{Ds}  \dede{\ell}{\bGam} =0\\
\displaystyle\vspace{0.2cm}\prt_t\frac{\delta \ell}{\delta u}  + u \partial _s\frac{\delta \ell}{\delta u}+ 2  \frac{\delta \ell}{\delta u} \partial _s u  + \frac{\delta \ell}{\delta \lambda} \partial _s \lambda = Q \partial _s \frac{\delta \ell}{\delta Q } \\
\displaystyle\vspace{0.2cm} 
\partial_t \frac{\delta \ell }{\delta \dot R} - \frac{\delta \ell}{\delta R} =0 
\\
\displaystyle\vspace{0.2cm} \partial _t \boldsymbol{\Omega} = \boldsymbol{\Omega} \times \boldsymbol{\omega} +\partial _s  \boldsymbol{\omega}, \qquad  \partial _t \boldsymbol{\Gamma} + \boldsymbol{\omega} \times \boldsymbol{\Gamma} = \partial _s \boldsymbol{\gamma} + \boldsymbol{\Omega} \times \boldsymbol{\gamma}\\
\displaystyle \partial _t  Q + \partial _s (Q u)=0, \qquad\partial _t \lambda+ u \partial _s \lambda=0,
\end{array}\right.
\end{equation}
where the symbols $  {\delta \ell}/{\delta \boldsymbol{\omega} },  {\delta \ell}/{\delta \boldsymbol{\Gamma} }, ... $ denote the functional derivatives of $\ell$ relative to the $L ^2 $ pairing, and we introduced the notations
\[
\frac{D}{Dt} =\partial_t + \boldsymbol{\omega}\times\quad\text{and}\quad\frac{D}{Ds} =\partial_s + \boldsymbol{\Omega}\times.
\]
Note that the first equation arises from the terms proportional to $\bsigma$ in the variation of the action functional and thus describes the conservation of angular momentum.  The second equation arises from the terms proportional to $\boldsymbol{\psi}$ and describes the conservation of linear momentum. We remind the reader that the quantities $\bsigma$ and $\boldsymbol{\psi}$ are defined by $ \bsigma=(\Lambda ^T \de \Lambda)^\vee$ and 
$\boldsymbol{\psi}= \Lambda^T \de \mathbf{r}$, with $(\bsigma(s,t), \boldsymbol{\psi}(s,t))$ for fixed $s$ and $t$ taking   values in the Lie algebra $\mse(3)$. 

The third equation is obtained by collecting the terms proportional to $\eta$ and describes the conservation of fluid momentum.  The fourth equation comes from collecting the terms proportional to $\de R$ and describes the elastic deformation of the walls due to the pressure. Finally, the last four equations arise from the four definitions $\bOm=(\Lambda^{-1} \Lambda_s)^\vee$, $\bGam=\Lambda^{-1} \mathbf{r} '$, $Q =  ( Q _0 \circ \varphi ^{-1} )\partial _s \varphi ^{-1}$, and $\lambda= \lambda_0 \circ \varphi ^{-1}$.

We will derive the equations for general $Q=Q(\bOm,\bGam,R)$, keeping in mind the fully three-dimensional dynamics, but will only use it for the tubes satisfying $Q=\pi R^2 |\bGam|$ which is a good approximation for mostly straight tubes.  For the choice
\[
\ell
=
\int_0^L \Big[\ell_0( \boldsymbol{\omega} ,\boldsymbol{\gamma} , \boldsymbol{\Omega} ,  \boldsymbol{\Gamma} ,u, \lambda, R,\dot{R}, R') - p( Q-Q_0 \circ \varphi^{-1} \partial_s \varphi^{-1}) 
 - p_{\rm ext}Q \Big] \mbox{d} s \, ,
\]
the functional derivatives are computed as
\begin{equation}\label{functional_derivatives}
\begin{aligned}
\frac{\delta \ell}{\delta \boldsymbol{\Omega} }&= \frac{\partial  \ell_0}{\partial \boldsymbol{\Omega} } +  (p- p_{\rm ext}) \frac{\partial Q}{\partial \boldsymbol{\Omega} }\\
\frac{\delta \ell}{\delta \boldsymbol{\Gamma } }&= \frac{\partial \ell_0}{\partial \boldsymbol{\Gamma } } +  (p- p_{\rm ext}) \frac{\partial Q}{\partial \boldsymbol{\Gamma } }\\
\frac{\delta \ell}{\delta R }&= \frac{\partial \ell_0}{\partial R } - \partial _s\frac{\partial \ell_0}{\partial R' }+ \partial _ s^2 \frac{\partial \ell_0}{\partial R'' } +  (p- p_{\rm ext}) \frac{\partial Q}{\partial R }\\
\frac{\delta \ell}{\delta Q }&= \frac{\partial \ell_0}{\partial Q } - (p-p_{\rm ext}) ,
\end{aligned}
\end{equation}
where $\partial \ell_0/\partial \boldsymbol{\Omega}, \partial \ell_0/\partial \boldsymbol{\Gamma} $, ... denote the ordinary partial derivatives of $\ell_0$,  whose explicit form can be directly computed from the expression of $\ell_0$ in \eqref{Lagrangian_fluid_tube}.

\begin{theorem}\label{vp_tube_fluid}  For the Lagrangian $\ell$ in \eqref{Lagrangian_fluid_tube}, the variational principle \eqref{min_action_gas} with constrained variations \eqref{delta1}, \eqref{delta2}, \eqref{delta_u_lam} yields the equations of motion
\begin{equation}\label{full_3D}{\fontsize{11pt}{9pt}\selectfont
\!\!\!\!\left\lbrace\!\begin{array}{l}
\displaystyle\vspace{0.2cm}\frac{D}{Dt} \pp{\ell_0}{\bom}+\bgam\times\pp{\ell_0}{\bgam} +\frac{D}{Ds}  \left( \pp{\ell_0}{\bOm} + (p-p_{\rm ext})\pp{Q}{\bOm} \right)   +\bGam\times 
 \left( 
\pp{\ell_0}{\bGam} + (p-p_{\rm ext})\pp{Q}{\bGam} 
\right) =0
\\
\displaystyle\vspace{0.2cm}\frac{D}{Dt} \pp{\ell_0}{\bgam} +\frac{D}{Ds} \left( \pp{\ell_0}{\bGam}+(p-p_{\rm ext}) \pp{Q}{\bGam}   \right)=0\\
\displaystyle\vspace{0.2cm}\prt_t\frac{ \partial  \ell_0}{ \partial  u}  + u \partial _s\frac{ \partial  \ell_0}{ \partial  u}+ 2  \frac{ \partial  \ell_0}{ \partial  u} \partial _s u +\pp{\ell_0}{\lambda} \partial_s \lambda  = - Q \partial_s p \\
\displaystyle\vspace{0.2cm} 
\partial_t \pp{\ell_0}{\dot R} - \partial^2_s \pp{\ell_0}{ R''} +  \partial_s \pp{\ell_0}{ R'} -  \pp{\ell_0}{ R}- (p-p_{\rm ext})\frac{\partial Q}{\partial R}=0 
\\
\displaystyle\vspace{0.2cm} \partial _t \boldsymbol{\Omega} = \boldsymbol{\Omega} \times \boldsymbol{\omega} +\partial _s  \boldsymbol{\omega}, \qquad  \partial _t \boldsymbol{\Gamma} + \boldsymbol{\omega} \times \boldsymbol{\Gamma} = \partial _s \boldsymbol{\gamma} + \boldsymbol{\Omega} \times \boldsymbol{\gamma}\\
\displaystyle \partial _t Q+ \partial _s (Qu)=0, \qquad\partial _t \lambda + u \partial _s \lambda =0
\end{array}\right.}
\end{equation}
together with appropriate boundary conditions enforcing vanishing of the variations of the boundary terms. 
\rem{ 
\begin{equation}\label{BC} 
\left. \dede{\ell}{\bOm} - \mu \pp{Q}{\bOm} \right|_{s=0,L} =0 \, , 
\quad 
\left. \dede{\ell}{\bGam} - \mu \pp{Q}{\bGam} \right|_{s=0,L}=0 \, , \quad 
\left. \dede{\ell}{u}u - \mu Q \right|_{s=0,L}=0
\end{equation} 
hold. If one of the extremity of the rod is fixed, say $s=0$, then the first two boundary conditions only arise at $s=L$. If the fluid velocity is prescribed at one extremity of the rod, say $s=0$, then the last boundary condition only arise at $s=L$.
} 
\end{theorem}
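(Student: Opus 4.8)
\emph{Proof strategy.} The plan is to compute the first variation of the reduced action in \eqref{min_action_gas} term by term, substitute the constrained variations, integrate by parts in $t$ and $s$ so as to transfer all derivatives onto the free variations $\bsigma$, $\bpsi$, $\eta$ and $\delta R$, and then set the coefficient of each free variation to zero. Since the configuration splits into the left-invariant (convective) block $(\bom,\bgam,\bOm,\bGam)$ and the right-invariant (fluid) block $(u,\lambda,Q)$, I would organize the bookkeeping by which free variation each term feeds into: $\bsigma$ produces the angular-momentum equation, $\bpsi$ the linear-momentum equation, $\eta$ the fluid-momentum equation, and $\delta R$ the wall equation, while the variation of the multiplier $p$ returns the incompressibility constraint.

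For the convective block I would insert \eqref{delta1}--\eqref{delta2} into $\frac{\delta\ell}{\delta\bom}\cdot\delta\bom + \frac{\delta\ell}{\delta\bgam}\cdot\delta\bgam + \frac{\delta\ell}{\delta\bOm}\cdot\delta\bOm + \frac{\delta\ell}{\delta\bGam}\cdot\delta\bGam$, integrate the $\partial_t$ and $\partial_s$ pieces by parts, and repeatedly apply the scalar triple product identity $\mathbf{a}\cdot(\mathbf{b}\times\mathbf{c}) = \mathbf{c}\cdot(\mathbf{a}\times\mathbf{b})$ to convert the cross-product terms into coefficients of $\bsigma$ and $\bpsi$. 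Collecting $\bsigma$ gives the first equation and collecting $\bpsi$ the second, with the covariant operators $\frac{D}{Dt} = \partial_t + \bom\times$ and $\frac{D}{Ds} = \partial_s + \bOm\times$ emerging automatically. This block is structurally identical to the fluid-free rod calculation and to the derivation in \cite{FGBPu2019}, so I would invoke that computation and merely record its output.

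The fluid block carries the new content. The single free variation $\eta$ enters in three places simultaneously: through $\delta u = \partial_t\eta + u\partial_s\eta - \eta\partial_s u$, through $\delta\lambda = -\eta\partial_s\lambda$, and, via the incompressibility term $-p(Q - Q_0\circ\varphi^{-1}\partial_s\varphi^{-1})$, through $\delta(Q_0\circ\varphi^{-1}\partial_s\varphi^{-1}) = -\partial_s(\eta Q)$. I would handle the $u$-contribution by integrating $\frac{\partial\ell_0}{\partial u}\partial_t\eta$ by parts in $t$ and $\frac{\partial\ell_0}{\partial u}\,u\,\partial_s\eta$ by parts in $s$; the derivative $\partial_s(u\,\partial\ell_0/\partial u)$ that this produces combines with the undifferentiated $-\frac{\partial\ell_0}{\partial u}\partial_s u$ term to give the characteristic factor $2\frac{\partial\ell_0}{\partial u}\partial_s u$. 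Adding the $\lambda$-contribution $-\frac{\partial\ell_0}{\partial\lambda}\partial_s\lambda$ and the pressure contribution from the constraint (one further integration by parts in $s$, moving $\partial_s$ onto $p$) assembles the third equation of \eqref{system_ell}; the right-hand side $Q\,\partial_s(\delta\ell/\delta Q)$ reduces, after inserting the functional derivatives \eqref{functional_derivatives} and absorbing the $\partial\ell_0/\partial Q$ term into the pressure, to the compact $-Q\partial_s p$ of \eqref{full_3D}. Keeping the signs and the boundary terms straight across these three simultaneous sources of $\eta$ is the step I expect to be the main obstacle.

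Finally, the $\delta R$ coefficient yields the wall equation: I would integrate the $R'$ and $R''$ terms by parts once and twice in $s$, producing $\partial_s\frac{\partial\ell_0}{\partial R'}$ and $\partial_s^2\frac{\partial\ell_0}{\partial R''}$, and integrate the $\dot R$ term by parts in $t$, while the geometric area contributes the $(p-p_{\rm ext})\partial Q/\partial R$ term through \eqref{functional_derivatives}. Substituting \eqref{functional_derivatives} throughout then converts the functional-derivative form \eqref{system_ell} into the explicit system \eqref{full_3D}. The two remaining lines of \eqref{full_3D} are not variational: the compatibility relations for $\bOm,\bGam$ follow by cross-differentiating the definitions \eqref{def_conv_var}, and the advection laws $\partial_t Q + \partial_s(Qu) = 0$ and $\partial_t\lambda + u\partial_s\lambda = 0$ restate \eqref{Q_cons} and \eqref{cons_law_lambda}. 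All the $s=0,L$ boundary terms generated along the way are collected to give the stated boundary conditions.
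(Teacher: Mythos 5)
Your proposal is correct and follows essentially the same route as the paper: the paper's stated proof is just the final step you describe last, namely substituting the functional derivatives \eqref{functional_derivatives} into the abstract system \eqref{system_ell}, with the underlying variational computation (collecting coefficients of $\bsigma$, $\bpsi$, $\eta$, $\delta R$ after integration by parts) carried out earlier in the text and deferred to \cite{FGBPu2019} for details. Your term-by-term bookkeeping — in particular the way $\partial_s(u\,\partial\ell_0/\partial u)$ combines with $-(\partial\ell_0/\partial u)\partial_s u$ to give the factor $2(\partial\ell_0/\partial u)\partial_s u$, and the absorption of the $\partial\ell_0/\partial Q$ contribution into the pressure on the right-hand side — matches what that computation yields.
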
 
\begin{proof}  The system is obtained by replacing the expression of the functional derivatives \eqref{functional_derivatives} in the system \eqref{system_ell}. \end{proof} 

Note that the term $\dede{\ell_0}{R}$ contains both the terms coming from the elastic energy  and the derivatives of the shape function $\Phi(R)$. Thus, the vorticity contributes to the pressure term and momentum balance of the wall, as expected. 

\begin{remark}[On the formal correspondence between the entropy and circulation] 
{\rm One can notice the formal correspondence between \eqref{system_ell} and the corresponding equations for the incompressible fluid with thermal energy derived in \cite{FGBPu2019}. Mathematically, the \emph{appearance} of equations is equivalent when the circulation $\lambda$ is associated with entropy $S$ in \cite{FGBPu2019}, which is due to the same way the circulation and entropy are advected. That association is purely mathematical: there is no physical correspondence between these quantities, and they enter the Lagrangian in a very different way. The circulation enters the kinetic energy part of the Lagrangian, and entropy enters the internal energy part. Thus, in spite of the apparent formal similarity between these equations, the use of any particular physically relevant Lagrangian leads to very different equations for the fluids containing  swirling vs thermal energy.  }
\end{remark} 
\medskip

Equations \eqref{full_3D}  have to be solved as a system of nonlinear partial differential equations, since all equations are coupled through the functional form of the Lagrangian \eqref{Lagrangian_fluid_tube}. This can be seen, for example, from computing the derivative 
\begin{equation} 
\frac{ \partial \ell _0 }{ \partial  u} = \rho A(R) | \boldsymbol{\Gamma} |  \big( \bgam + \bGam u \big)\cdot \bGam\, ,
\label{dldu}
\end{equation}
which appears in the the balance of fluid momentum, \emph{i.e.}, the third equation in \eqref{full_3D}. Thus, the fluid momentum involves the radius, stretch $\bGam$, and linear velocity $\bgam$, as well as the fluid's velocity. 

If we define the rescaled fluid momentum as
\[
m:= \frac{1}{\rho Q}\dede{\ell_0}{u} 
\]
the third equation in \eqref{full_3D} can be (deceptively) simply written as
\begin{equation}
\label{meq}
\partial _t m+  \partial _s \left(mu  -  \frac{1}{\rho}\pp{\ell_0}{Q} \right)+\frac{1}{\rho Q}  \pp{\ell_0}{\lambda} \lambda_s=-\frac{1}{\rho} \partial_s p \, , \quad  \, . 
\end{equation}
which is reminiscent of the equations of motion for one-dimensional fluid with the additional circulation terms describing the evolution of vorticity. 
\rem{ 
Also, in \eqref{meq} we have
\begin{equation} 
\label{dell_dxi} 
\frac{1}{\rho}\pp{\ell_0}{Q} = \frac{1}{2}  \left| \boldsymbol{\gamma} + \boldsymbol{\Gamma} u\right | ^2, \quad 
\frac{1}{\rho Q }\pp{\ell_0}{\lambda} =\frac{\Phi \lambda}{Q}  \, . 
\end{equation} 

\medskip
} 

For $Q(R,\bGam)=A(R) |\bGam|$, the equations \eqref{full_3D} can be further simplified since 
\[ 
\pp{Q}{\bGam} = A(R) \frac{\bGam}{|\bGam|} =Q  \frac{\bGam}{|\bGam|^2} \quad\text{and}\quad   \bGam \times \pp{Q}{\bGam}=0  \, . 
\]
Notice also that for $A(R)= \pi R ^2 $, we have $\mbox{d}A/\mbox{d}R= 2 \pi R$, the circumference of a circle.

\section{Tubes with a straight centerline: dynamics and reduction to a single equation} 
\label{sec:1D_tubes}

In this section we shall specify our model to the case of an inextensible and unshearable expandable tube. We then focus on straight expandable tubes with no rotational motion and compare our model with previous works. This case has been studied extensively in the literature, mostly in the context of blood flow involving incompressible fluid. We also show that these simplified models arise from a variational principle using the back-to-labels map. The explicit use of back-to-labels map allows to derive a single equation of motion without the need for the pressure-like Lagrange multiplier term $p$.

\subsection{Equations of motion for inextensible and unshearable tubes}

The dynamics of an inextensible and unshearable, but expandable, tube conveying compressible fluid is obtained by imposing the constraint $\bGam (t,s)= \bchi$, for all $t,s$. If we denote by $\mathbf{z}$ the Lagrange multiplier associated to this constraint and add the term $\int_0^T\int_0^L \mathbf{z} \cdot ( \boldsymbol{\Gamma} - \bchi)\mbox{d}s\,\mbox{d}t$
to the action functional in our variational principle \eqref{min_action_gas}, the first two equations in \eqref{full_3D} will change to
\[
\left\lbrace
\!\!\begin{array}{l}
\displaystyle\vspace{0.2cm}\frac{D}{Dt} \pp{\ell_0}{\bom}+\!\bgam\!\times\pp{\ell_0}{\bgam} +\frac{D}{Ds}  \left( \pp{\ell_0}{\bOm} +  (p-p_{\rm ext})  \pp{Q}{\bOm}  \right)   +\!\bGam\!\times 
 \left( 
\pp{\ell_0}{\bGam} + (p-p_{\rm ext})  \pp{Q}{\bGam} + \mathbf{z} 
\right) =0
\\
\displaystyle\frac{D}{Dt}\pp{\ell_0}{\bgam} + \frac{D}{Ds}\left( \pp{\ell_0}{\bGam}+ (p-p_{\rm ext})  \pp{Q}{\bGam}   + \mathbf{z}\right)=0.
\end{array}\right.
\]
The physical meaning of $\mathbf{z}$ is the reaction force enforcing the inextensibility constraint. 

Particular simple solutions of this system can be obtained by assuming the axis of the tube being straight and no rotational motion. In that case, the inextensibility constraint leads to $\mathbf{z} = z \bchi$, the direction along the axis. 
For such particular solutions, the angular momentum equation is satisfied identically, and the linear momentum equation reduces to a one dimensional equation for the reaction force $z$. Therefore, we only need to compute the fluid momentum equation and the Euler-Lagrange equations for $R$. From the third and fourth equations in \eqref{full_3D}, we get  for $A=A(R)=\pi R^2$: 
\begin{equation}\label{inext_unshear_omega}  
\left\lbrace\begin{array}{l}
\displaystyle\vspace{0.2cm}\partial _t \big(\rho A u \big)+\partial _s \big(\rho A u^2   \big)
+  \rho  \Phi  \lambda \lambda_s   =- A \partial _s p 
\\
\displaystyle a  \ddot R- \partial _s  \pp{F}{R'} + \pp{F}{R} -\frac{1}{2} \pp{\Phi}{R} \lambda^2 = 2 \pi R \left( p - p_{\rm ext} \right),
\end{array} \right.
\end{equation} 
together with the conservation of mass and vorticity 
\begin{equation} 
\partial _t A +\partial _s( A u)=0, \quad \partial _t \lambda +u \partial _s \lambda=0 ,
\label{lambda_A_cons}
\end{equation}
This gives a system of four equations for the four variables $u$, $R$, $\lambda$, and $ p$. 

It is often more convenient to use the cross-sectional area $A$ instead of the radius $R$, and express the Lagrangian
in terms of $A$ and its derivatives, as well as $u$ and $\lambda$. To explicitly denote the change of variables, we shall denote that Lagrangian as $\ell_A$, and the Lagrangian function as $\ell_{0,A}$, so $\ell_A=\int \ell_{0,A} \mbox{d} s$. The equations of motion are then: 
\begin{equation} 
\label{inext_unshear_A}  
\left\lbrace
\begin{array}{l}
\displaystyle\vspace{0.2cm}\partial _t \pp{\ell_{0,A}}{u}+ u \partial _s \pp{\ell_{0,A}}{u} +2 \pp{\ell_{0,A}}{u} u_s = - A p_s 
\\
\displaystyle\vspace{0.2cm} \partial_t \pp{\ell_A}{A_t} + \partial _s  \pp{\ell_A }{A_s}  - \pp{\ell_A }{A} +\pp{\ell}{\lambda} \lambda_s  = \left( p - p_{\rm ext} \right),
\\ 
\displaystyle A_t + \partial_s (A u ) = 0 \, , \quad \lambda_t + u \lambda_s =0 \, . 
\end{array} \right.
\end{equation} 

\rem{ 
For example, the Lagrangian can be taken as the sum of the kinetic energy of the wall, the kinetic energy of the fluid, and minus the potential energy of the wall deformation, written as 
\begin{equation} 
\ell_{0,A}=\int K_A(A, A_t) + \frac{1}{2} \rho A u^2 + \frac{1}{2} \Phi_A(A) \lambda^2 - F_A(A,A_s)  \mbox{d} s
\label{ell_A_explicit} 
\end{equation} 
where $K_A(A, A_t)$ is the kinetic energy of the wall motion expressed in terms of $A$ and its time derivatives, $F_A(A,A_s)$ is the elastic energy function expressed in terms of $A$ and its spatial derivatives, and $\Phi_A$ is the vorticity shape function $\Phi(R)$ expressed in terms of $A$. 
 Equations \eqref{inext_unshear_omega} then become 
\begin{equation}\label{inext_unshear_A}  
\left\lbrace
\begin{array}{l}
\displaystyle\vspace{0.2cm}\partial _t \big(\rho A u \big)+\partial _s \big(\rho A u^2   \big)
+  \rho  \Phi  \lambda \lambda_s   =- A \partial _s p 
\\
\displaystyle \partial_t \pp{K_A}{A_t} + \partial _s  \pp{F_A}{A_s}  - \pp{}{A}(F_A-K_A) -\frac{1}{2} \pp{\Phi}{A} \lambda^2 = \left( p - p_{\rm ext} \right),
\end{array} \right.
\end{equation} 
which have to be solved together with \eqref{lambda_A_cons}. 
} 
Without the vorticity, \eqref{inext_unshear_A} represent the familiar model of tube with elastic walls incorporating the wall inertia \cite{QuTuVe2000,FoLaQu2003}. For example, one can consider the following form of the Lagrangian function $\ell_{0,A}:$ 
\begin{equation} 
\ell_{0,A}=K_A(A, A_t) + \frac{1}{2} \rho A u^2 + \frac{1}{2} \Phi_A(A) \lambda^2 -2  F_A(A,A_s)  \, , 
\label{ell_A_explicit} 
\end{equation} 
where $K_A(A, A_t)$ is the kinetic energy of the wall motion expressed in terms of $A$ and its time derivatives, $F_A(A,A_s)$ is the elastic energy function expressed in terms of $A$ and its spatial derivatives, and $\Phi_A$ is the vorticity shape function $\Phi(R)$ expressed in terms of $A$. 
If the dependence of the Lagrangian function $\ell_{0,A}$ on $A_t$ and $A_s$ is neglected, the second equation of \eqref{inext_unshear_A} yields the so-called \emph{wall pressure law} giving an explicit connection between pressure and area, see \cite{Se2016} and references therein. This wall pressure law $p(A)$ can then be substituted in the first equation of \eqref{inext_unshear_A}, and, in conjunction with the conservation law (the first equation of \eqref{lambda_A_cons}) allows  to obtain a closed set of two coupled equations for $A$ and $u$. We shall not drop the inertia terms in \eqref{inext_unshear_A} in what follows and pursue the more general case of a general Lagrangian dependence on $A$ and its time and $s$-derivatives. 

\subsection{The case of uniform circulation, energy, and the new constant of motion} 
\label{sec:uniform_lam} 
Let us now analyze the equations of motion \eqref{inext_unshear_A} in the case when $\lambda=\lambda_*=$const. Then, $\lambda=\lambda_*$ everywhere in the flow. Physically, such system can be realized by producing a constant vortex circulation at tube's entry using a mechanical device, \emph{e.g.} an idealized propeller. In that case, we can define $\overline{\ell}_{0,A}=\ell_{0,A}$ with $\lambda$ set to be equal to the value of $\lambda_*$. Equations of motion for the general Lagrangian function $\overline{\ell}_{0,A}(A,A_t,A_s,u)$ are then given as 
\begin{equation} 
\left\lbrace
\begin{array}{l}
\displaystyle\vspace{0.2cm}\partial _t \pp{\overline{\ell}_{0,A}}{u}+ u \partial _s \pp{\overline{\ell}_{0,A}}{u} +2 \pp{\overline{\ell}_{0,A}}{u} u_s = - A p_s 
\\
\displaystyle\vspace{0.2cm} \partial_t \pp{\overline{\ell}_{0,A}}{A_t} + \partial _s  \pp{\overline{\ell}_{0,A}}{A_s}  - \pp{\overline{\ell}_{0,A}}{A}   = \left( p - p_{\rm ext} \right),
\\ 
\displaystyle A_t + \partial_s (A u ) = 0 \, . 
\end{array}
\right. 
\label{eq_A} 
\end{equation} 
One would expect that there is an energy-like quantity $E$ which is conserved on solutions of \eqref{eq_A}. Indeed, we can prove the following 
\begin{lemma}[Energy conservation] 
{\rm 
The quantity 
\begin{equation} 
E= \int \pp{\overline{\ell}_{0,A}}{A_t} A_t + \pp{\overline{\ell}_{0,A}}{u} u -  \overline{\ell}_{0,A} \mbox{d} s \, . 
\label{E_quantity} 
\end{equation} 
is conserved on solutions of \eqref{eq_A}. 
}
\end{lemma}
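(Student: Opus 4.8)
The plan is to verify $dE/dt=0$ by differentiating \eqref{E_quantity} under the integral sign, eliminating the time derivatives of the conjugate momenta $\pp{\overline{\ell}_{0,A}}{A_t}$ and $\pp{\overline{\ell}_{0,A}}{u}$ via the Euler--Lagrange equations in \eqref{eq_A}, and then recognizing that the resulting integrand is a pure $s$-divergence. First I would apply the chain rule to $\partial_t \overline{\ell}_{0,A}$, regarding $\overline{\ell}_{0,A}$ as a function of $A,A_t,A_s,u$. The two terms $\pp{\overline{\ell}_{0,A}}{A_t}A_{tt}$ and $\pp{\overline{\ell}_{0,A}}{u}u_t$ cancel against their counterparts from differentiating the first two summands of $E$, leaving
\begin{equation}
\partial_t e = \partial_t\!\left(\pp{\overline{\ell}_{0,A}}{A_t}\right) A_t + \partial_t\!\left(\pp{\overline{\ell}_{0,A}}{u}\right) u - \pp{\overline{\ell}_{0,A}}{A}\, A_t - \pp{\overline{\ell}_{0,A}}{A_s}\, A_{ts},
\end{equation}
where $e$ denotes the integrand of \eqref{E_quantity}.

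Next I would substitute the two momentum balances. The wall equation (second line of \eqref{eq_A}) gives $\partial_t \pp{\overline{\ell}_{0,A}}{A_t} = -\partial_s \pp{\overline{\ell}_{0,A}}{A_s} + \pp{\overline{\ell}_{0,A}}{A} + (p-p_{\rm ext})$; the $\pp{\overline{\ell}_{0,A}}{A}$ piece cancels the explicit $-\pp{\overline{\ell}_{0,A}}{A}\,A_t$ term, and the rest combines with $-\pp{\overline{\ell}_{0,A}}{A_s}A_{ts}$ into the exact derivative $-\partial_s\!\big(\pp{\overline{\ell}_{0,A}}{A_s}A_t\big)$, leaving a leftover $(p-p_{\rm ext})A_t$. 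For the fluid sector, the momentum equation (first line of \eqref{eq_A}) gives $\partial_t \pp{\overline{\ell}_{0,A}}{u} = -u\,\partial_s \pp{\overline{\ell}_{0,A}}{u} - 2\pp{\overline{\ell}_{0,A}}{u}u_s - A p_s$; multiplying by $u$, the first two terms assemble into $-\partial_s\!\big(\pp{\overline{\ell}_{0,A}}{u}u^2\big)$ --- here the coefficient $2$ in the fluid-momentum balance is exactly what is required --- and there remains a leftover $-Au\,p_s$.

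The crux is then the two surviving pressure contributions $(p-p_{\rm ext})A_t - Au\,p_s$, which are not manifestly a divergence because the multiplier $p$ does not arise from a genuine potential. Here I would invoke the incompressibility constraint (third line of \eqref{eq_A}), $A_t=-\partial_s(Au)$, together with $p_{\rm ext}=\text{const}$, to rewrite $(p-p_{\rm ext})A_t - Au\,p_s = -\partial_s\!\big[(p-p_{\rm ext})Au\big]$. Assembling all the pieces gives
\begin{equation}
\partial_t e = -\partial_s\!\left[ \pp{\overline{\ell}_{0,A}}{A_s}\,A_t + \pp{\overline{\ell}_{0,A}}{u}\,u^2 + (p-p_{\rm ext})\,Au \right],
\end{equation}
so that $dE/dt$ equals minus this flux evaluated at $s=0,L$, which vanishes under the boundary conditions associated with the variational principle. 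The main obstacle --- and the only genuinely nontrivial point --- is precisely this last step: showing that the pressure terms, which enter as a constraint force rather than through the Lagrangian, nonetheless organize into a boundary flux. This hinges both on trading $A_t$ for $-\partial_s(Au)$ through mass conservation and on the specific coefficient $2$ in the fluid-momentum equation; without either, the leftover terms would fail to be a total $s$-derivative and $E$ would not be conserved.
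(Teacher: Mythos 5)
Your proof is correct and follows exactly the route the paper indicates (differentiate under the integral, substitute the momentum balances from \eqref{eq_A}, integrate by parts, and use the mass-conservation constraint to absorb the pressure terms into a flux). In fact you supply more than the paper does: the paper omits the computation entirely, deferring to the proof of the later energy lemma for the back-to-labels formulation, which contains no pressure term at all — so the step you correctly identify as the crux, namely rewriting $(p-p_{\rm ext})A_t - Au\,p_s$ as $-\partial_s\bigl[(p-p_{\rm ext})Au\bigr]$ via $A_t=-\partial_s(Au)$, is precisely the detail that the paper's cross-reference does not cover.
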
 
\begin{proof} 
This result is proven by computing the time derivative of $E$ and integrating by parts, using the equations of motion \eqref{eq_A}, assuming that the boundary terms vanish when integrating by parts, and utilizing the constraint, \emph{i.e.}, the last equation of \eqref{eq_A}. This proof closely follows that of Lemma~\ref{lemma:energy} below, and it is relatively straightforward, so we do not present is here. 
\end{proof} 

\begin{remark}[On the physical meaning of $E$] 
{\rm 
One can notice that the quantity $E$ defined by \eqref{E_def} is, in general, \emph{not} equal to the kinetic plus potential energy (also sometimes called the 'physical' energy).  For example, for the Lagrangian defined by \eqref{ell_A_explicit}, 
\begin{equation} 
E_a= \frac{1}{2} \int K_A(A, A_t) + \frac{1}{2} \rho A u^2 \fbox{-} \frac{1}{2} \Phi_A(A) \lambda^2 + 2 F_A(A,A_s)  \mbox{d} s\, ,  
\label{E_a} 
\end{equation} 
where as the kinetic plus potential energy will be given by the expression 
\begin{equation} 
\hspace{-3mm} 
 \mbox{Kin $+$ Pot Energy}= \frac{1}{2} \int K_A(A, A_t) + \frac{1}{2} \rho A u^2 \fbox{+} \frac{1}{2} \Phi_A(A) \lambda^2 + 2 F_A(A,A_s)  \mbox{d} s \, . 
\label{E_a_wrong} 
\end{equation} 
The energy defined by \eqref{E_a} is conserved, whereas the quantity defined by \eqref{E_a_wrong} is not. Changing time derivatives to the corresponding momenta in \eqref{E_a} will give the Hamiltonian of the system. The non-conservation of 'physical' energy, \emph{i.e.}, the kinetic plus potential energy,  is typical for systems with rotation, see \cite{JoSa2000}.  We believe it is difficult to guess the rather non-intuitive form \eqref{E_a} without appealing to the variational principle used in the derivation of \eqref{single_b_eq_non_dim}. 
} 
\end{remark} 

What is rather surprising, however, is that one can find there is another constant of motion. We prove the following 
\begin{theorem}[Additional constant of motion] 
{\rm 
The following quantity is conserved on the solutions of \eqref{eq_A}
\begin{equation} 
I=\int \pp{\ell}{u}-A_s \pp{\ell}{A_t} \mbox{d} s \, . 
\label{int_gen} 
\end{equation} 
}
\end{theorem}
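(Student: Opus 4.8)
The plan is to show that the time derivative of the integrand in \eqref{int_gen} is a perfect $s$-derivative, so that $dI/dt$ collapses to a boundary flux that vanishes under the stated boundary conditions. Write $\bar\ell:=\overline{\ell}_{0,A}(A,A_t,A_s,u)$ for the uniform-circulation Lagrangian density appearing in \eqref{eq_A} (this is the density meant by $\ell$ in the statement). Since the spatial domain $[0,L]$ is fixed, I would differentiate under the integral,
\[
\frac{dI}{dt}=\int_0^L\left(\partial_t\pp{\bar\ell}{u}-(\partial_t A_s)\pp{\bar\ell}{A_t}-A_s\,\partial_t\pp{\bar\ell}{A_t}\right)\mbox{d}s,
\]
and then eliminate every time derivative using \eqref{eq_A}.

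Concretely, the fluid-momentum equation gives $\partial_t\pp{\bar\ell}{u}=-u\partial_s\pp{\bar\ell}{u}-2\pp{\bar\ell}{u}u_s-Ap_s$, the wall equation gives $\partial_t\pp{\bar\ell}{A_t}=-\partial_s\pp{\bar\ell}{A_s}+\pp{\bar\ell}{A}+(p-p_{\rm ext})$, and equality of mixed partials applied to the mass constraint $A_t+\partial_s(Au)=0$ gives $\partial_t A_s=\partial_s A_t$. Substituting these, I would split the result into a \emph{pressure block} and a \emph{Lagrangian block}. The pressure terms combine at once: because $p_{\rm ext}$ is constant, $-Ap_s-A_s(p-p_{\rm ext})=-\partial_s\!\big(A(p-p_{\rm ext})\big)$.

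For the remaining terms, the key move is to compare against the chain-rule expansion $\partial_s\bar\ell=\pp{\bar\ell}{A}A_s+\pp{\bar\ell}{A_t}\partial_s A_t+\pp{\bar\ell}{A_s}\partial_s A_s+\pp{\bar\ell}{u}u_s$. Carrying out the bookkeeping, the non-pressure contributions assemble exactly into $\partial_s\!\big(-\bar\ell-u\pp{\bar\ell}{u}+A_s\pp{\bar\ell}{A_s}\big)$; in particular the $\pp{\bar\ell}{A_s}\partial_s A_s$ terms cancel, and the factor-two term $2\pp{\bar\ell}{u}u_s$ recombines with $-u\partial_s\pp{\bar\ell}{u}$ to reproduce the $u\pp{\bar\ell}{u}$ flux. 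Together with the pressure block this yields
\[
\partial_t\!\left(\pp{\bar\ell}{u}-A_s\pp{\bar\ell}{A_t}\right)=\partial_s\!\left(-\bar\ell-u\pp{\bar\ell}{u}+A_s\pp{\bar\ell}{A_s}-A(p-p_{\rm ext})\right),
\]
so $dI/dt$ equals this flux evaluated at $s=0,L$, which vanishes, proving conservation.

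The main obstacle is identifying the correct flux, i.e.\ verifying that the non-pressure terms genuinely close into a total $s$-derivative; this is where the factor-of-two term in the fluid momentum balance and the cancellation of the $A_s$-derivative terms must be tracked carefully. Conceptually, $I$ is the Noether momentum conjugate to the $s$-translation invariance of \eqref{eq_A}, which both predicts its existence and anticipates the shape of the flux. I would mention this interpretation as motivation but rely on the direct computation above as the actual proof, exactly as in the energy Lemma.
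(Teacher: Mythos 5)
Your proposal is correct and follows essentially the same route as the paper: differentiate $I$ in time, substitute the fluid-momentum and wall equations from \eqref{eq_A}, combine the pressure terms into $-\partial_s\big(A(p-p_{\rm ext})\big)$, and recognize the remainder as a total $s$-derivative via the chain-rule expansion of $\partial_s\overline{\ell}_{0,A}$. The only cosmetic difference is that you exhibit the explicit pointwise flux $-\overline{\ell}_{0,A}-u\,\partial\overline{\ell}_{0,A}/\partial u+A_s\,\partial\overline{\ell}_{0,A}/\partial A_s-A(p-p_{\rm ext})$ (a local conservation law, which I verified closes exactly), whereas the paper reaches the same conclusion by integrating by parts under the integral; both are valid and equivalent.
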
 
\begin{proof}
Differentiating the quantity defined by \eqref{int_gen} with respect to time, we obtain: 
\begin{equation} 
\label{dIdt_calc}
\hspace{-2mm} 
\begin{aligned} 
\frac{d I}{d t} & = \int \partial_t \pp{\overline{\ell}_{0,A}}{u}-A_{st}  \pp{\overline{\ell}_{0,A}}{A_t} - A_{s} \partial_t  \pp{\overline{\ell}_{0,A}}{A_t} \mbox{d} s 
\\ 
& = \int - u \partial_s \pp{\overline{\ell}_{0,A}}{u} - 2 \pp{\overline{\ell}_{0,A}}{u} u_s - A p_s - A_{st}  \pp{\overline{\ell}_{0,A}}{A_t}  
\\ 
& \qquad \qquad \qquad \qquad 
- A_s \left( - \partial_s \pp{\overline{\ell}_{0,A}}{A_s} + \pp{\overline{\ell}_{0,A}}{A} +\left( p -p_{\rm ext} \right) \right) \mbox{d} s 
\\ 
&= \int - \pp{\overline{\ell}_{0,A}}{u}u_s - \pp{\overline{\ell}_{0,A}}{A_s} A_{ss} 
\\ & \qquad \qquad \qquad \qquad  - \pp{\overline{\ell}_{0,A}}{A_t} A_{ts} - 
\pp{\overline{\ell}_{0,A}}{A} A_s - \partial_s \left[ \left( p-p_{\rm ext} \right)  A\right] \mbox{d} s = 0 
\end{aligned} 
\end{equation} 
In the second line of \eqref{dIdt_calc} we have used  equations \eqref{eq_A} for time derivatives of momenta, the third line is obtained by the integration by parts and neglecting the boundary terms, and noticing that the expression under the integrand in the last line is  (minus) the full derivative of the quantity $\overline{\ell}_{0,A}+(p-p_{\rm ext}) A$ with respect to $s$. 
\end{proof}

It is hard to readily assign a physical relevance to the quantity \eqref{int_gen}, beyond it being a linear combination of the wall and fluid momenta. The expression is valid for an arbitrary Lagrangian functions, and, apart from the kinetic energy coming from the rotation and being proportional to the parameter $\lambda_*^2$, which can be set to $0$, is quite general to describe a majority of particular choices of kinetic and potential energy for models in the literature, see e.g. \cite{Pe2003,Se2016} and references within.
It was thus rather surprising to us that in spite of a large body of literature on the topic, we have failed to find any mention of the conserved quantity \eqref{int_gen}. It is probably due to the fact that the physical meaning of the constant of motion \eqref{int_gen} is not apparent, and one can hardly guess it from the equations of motion written as the balance of mass and momenta conservation using considerations from the general physical principles.

We now turn our attention to the alternative description of motion utilizing the back-to-labels map. As it turns out, this description is highly beneficial from the mathematical point of view, as it allows to combine several conservation laws into a single equation.

\subsection{Equations of motion for the back-to-labels map } 
\label{sec:single_eq}
Let us now show how to reduce the the fluid and wall momenta equations, and the conservation laws, to a single differential equation using the back-to-labels map. The procedure of this method can be explained as follows. Let us denote, for shortness, $\psi (s,t)= \varphi^{-1}(s,t)$ to be the back-to-labels map, giving the Lagrangian label of a fluid particle at the point $s$ at time $t$. For a given initial cross-section $A_0(a)$, the pressure term is entering the equations  as the Lagrange multiplier for the constraint $A(s,t)= A_0 \circ \psi \partial_s \psi = A_0(\psi) \partial_s \psi$. Instead of enforcing this holonomic constraint using the Lagrange multiplier $p$, in this Section, we will use $\psi(s,t)$ as the free variable, express all the other variables, such as the cross-section, in terms of $\psi$ in the Lagrangian, and take the variations with respect to $\psi$. It will allow us to compute the equations of motion as the explicit   Euler-Lagrange equations in terms of variables $\psi$, and not the Euler-Poincar\'e equations as we used in \eqref{full_3D}. 

We can define the function $\Psi(\alpha)$ as the antiderivative of $A_0(\alpha)$ and rewrite the conservation law as 
\begin{equation} 
\Psi(\alpha) = \int^a A_0 (\alpha') \mbox{d} \alpha' \, , \quad A= \partial_s \Psi(\psi(s,t)) 
\label{Psi_def} 
\end{equation} 
In general, $\Psi(\alpha)$ is a given function given by the initial profile of the cross-section. Since $A_0(\alpha)>0$, $\Psi(\alpha)$ is a monotonically increasing function of $\alpha$. 
We then use \eqref{Psi_def} to rewrite the conservation law  
\begin{equation} 
\label{A_cons_gen} 
A_t + \partial_s (A u) =0 \quad \Leftrightarrow \quad \Psi_{st} + \partial_s (\Psi_s u) =0 \, . 
\quad \Leftrightarrow \quad
\Psi_t + u \Psi_s =C(t) \, . 
\end{equation} 
Setting the integration constant $C(t)=0$ in the above formula, for example, due to the boundary conditions, we find the expression of $u$ in terms of the back-to-labels map $\psi$ as 
\begin{equation} 
u=-\frac{\partial_t \Psi(\psi)}{\partial_s \Psi(\psi)}=- \frac{\psi_t}{\psi_s} 
\label{formula_u} 
\end{equation} 
The Lagrangian $\ell_A$ and the corresponding action $S_A$ for an inextensible, unshearable tube with the straight centerline is written as 
\begin{equation} 
\ell_A =\int \ell_{0,A}(A, A_t, A_s, u, \lambda) \mbox{d} s \, , \quad S_A = \int \ell_A \mbox{d} t \, , 
\label{L_A_simple} 
\end{equation} 
for example, see the explicit version of $\ell_A$ given in \eqref{ell_A_explicit}. We have denoted, as before, the Lagrangian itself as $\ell_A$, which involves the integral over the $s$-domain. The  function that is being integrated, \emph{i.e.}, the integrand of the Lagrangian, is denoted with a subscript $0$, \emph{i.e.}, $\ell_{0,A}$. 

Since, according to \eqref{Q_cons}, $A=A_0(\psi) \partial_s \psi$, using \eqref{formula_u} and the constraint $A=A_0(\psi) \partial_s \psi$, we can express the Lagrangian for the tube \eqref{L_A_simple} as a function of $\psi$ only, and derive the equations of motion as a single Euler-Lagrange equation obtained by variations of the action with respect to $ \psi$. 
Let us illustrate this method in more details, including the vorticity in the tube.

In order to explicitly separate the effect of fluid's motion, let us define $b=\psi(s,t)-s$, since for the absence of fluid's motion $\psi(s,t)=s$. Then, from \eqref{formula_u}, the fluid velocity is expressed as 
\begin{equation} 
u=- \frac{b_t}{1+ b_s} \, . 
\label{u_expression_b} 
\end{equation} 
In what follows we consider, for simplicity, the initial cross-section to be constant and the tube being initially uniform, \emph{i.e}, $A_0(\alpha)=A_0=$const, corresponding to $\Psi(\alpha)=\alpha$.  In principle, we can extend the theory for an  arbitrary  
initial cross-section $A_0(\alpha) >0$, although this will introduce unnecessary algebraic complexity into equations. 

\rem{
To derive the evolution equation for $\psi(s,t)$ 
$\psi \circ \varphi (\alpha,t)= \psi(\phi(\alpha,t),t)= \alpha$ for any $t$, and substituting $\alpha=\psi(s,t)$ of both sides after differentiation, we get 
\begin{equation} 
 \big(b(s,t) + s\big) \circ \varphi(\alpha,t) = \alpha 
\, 
\Rightarrow 
\, 
b_t \circ \varphi(a,t) + (1+ b_s) \circ \varphi_t (a,t) =0  
\, 
\Rightarrow 
\, 
b_t + u (1+b_s) =0 \, . 
\label{identity_lagr_map} 
\end{equation} 
}
Instead of the full cross-section $A$, it is more convenient to work with the relative deviation of the cross-section from the equilibrium, as this approach leads to the most algebraically simple equations.   More precisely, let us define $a$ to be the relative deviation from the equilibrium value according to $A=A_0 (1+ a)$. Then, the constraint $A = A_0 \circ \varphi^{-1} \partial_s \varphi^{-1}$ becomes 
\begin{equation} 
a= b_s \, . 
\label{a_b_constr} 
\end{equation} 
\rem{ 
Since there is an explicit algebraic connection between $a$ and $R$, we can use the variable $a$ in the  Lagrangian instead of $R$. For a circular cross-section, these variables are connected through $a=R^2/R_0^2 - 1$, so the variables $a$ and $R$ are equivalent if $R \neq 0$, which is the case of pinched tube. 
} 

In general, we will have $\ell_a=\int \ell_{0,a} (a, a_t, a_s, u, \lambda) \mbox{d} s$.  Just as in \eqref{L_A_simple}, we have dropped the dependence of the Lagrangian on $a_{ss}$, the inclusion of this dependence is trivial and only leads to unnecessary algebraic complexity.  We re-define the Lagrangian using \eqref{a_b_constr}, by substituting $a=b_s$, $a_t=b_{ts}$, $a_s=b_{ss}$ into $\ell_a$, and further employing \eqref{u_expression_b} to express the velocity in terms of derivatives of $b$:  
\begin{equation} 
\ell_{0,b}=  \ell_{0,a}(a,  a_t, a_s, u, \lambda) \quad \mbox{where} \quad a=b_s \, , \quad u \mbox{ given by \eqref{u_expression_b}}. 
\label{ell_b}
\end{equation} 
We then take variations of the corresponding action defined as $S_b=\int \ell_b \mbox{d} s \mbox{d} t$
\rem{ 
\begin{equation} 
\begin{aligned} 
& S_b= \int  \left. \ell(a,  a_t, a_s, u, \lambda)\right|_{a= b_s} \mbox{d} s  \mbox{d} t \, , 
\quad 
\mbox{subject to} 
\\
& \de b=- \eta (1+b_s) \, , \quad \de \lambda = - \eta \lambda_s 
\, , \quad 
\de u = \eta_t  + u \eta_s - \eta u _s 
\end{aligned} 
\label{action_b} 
\end{equation} 
Taking $\de S_b=0$ according to variations described in \eqref{action_b} and collecting the terms proportional to $\eta$  leads to 

\begin{equation} 
\label{b_system_deriv}
\begin{aligned} 
\de S_b&= \int \left(  \pp{\ell_a}{ a_t}\de a_t  + \pp{\ell_a}{a_s} \de a_s + \pp{\ell_a}{a} \de a \right) + 
\pp{\ell_a}{\lambda} \de \lambda + \pp{\ell_a}{u} \de u \mbox{d} s \mbox{d} t 
\\
& = \int 
\left( - \partial_t  \pp{\ell_a}{ a_t} -\partial_s  \pp{\ell_a}{a_s}+\pp{\ell_a}{a} \right) \de a  + 
\pp{\ell_a}{\lambda} \de \lambda + \pp{\ell_a}{u} \de u \mbox{d} s \mbox{d} t 
\\
&=\int  \left( - \partial_t  \pp{\ell_a}{ a_t} -\partial_s  \pp{\ell_a}{a_s}+\pp{\ell_a}{a} \right) \de b_s 
+\pp{\ell_a}{\lambda} \de \lambda + \pp{\ell_a}{u} \de u \mbox{d} s \mbox{d} t 
\\
&=\int - \partial_s \left( - \partial_t  \pp{\ell_a}{ a_t} -\partial_s  \pp{\ell_a}{a_s}+\pp{\ell_a}{a} \right) \de b 
+\pp{\ell_a}{\lambda} \de \lambda + \pp{\ell_a}{u} \de u \mbox{d} s \mbox{d} t 
\\ 
&=\int  \partial_s \left( - \partial_t  \pp{\ell_a}{ a_t} -\partial_s  \pp{\ell_a}{a_s}+\pp{\ell_a}{a} \right) (1+ b_s) \eta
-\pp{\ell_a}{\lambda} \lambda_s \eta  \\ 
& \qquad \qquad + \pp{\ell_a}{u} (\eta_t + u \partial_s \eta - \eta \partial_s u) \mbox{d} s \mbox{d} t 
\\ 
& \int  \left[ 
-\left( \partial^2_{ts} \pp{\ell_a}{a_t} + \partial^2_{ss} \pp{\ell_a}{a_s}- \partial_s \pp{\ell_a}{a} \right) (1+ b_s)  - \pp{\ell_a}{\lambda} \lambda_s
\right. 
\\ 
& - \left. \left(  \partial_t \pp{\ell_a}{u}  + u \partial_s  \pp{\ell_a}{u} + 2  \pp{\ell_a}{u} \partial_s  \right) \right] \eta \mbox{d} s \mbox{d} t 
\end{aligned} 
\end{equation} 
Therefore, the equations of motion become: 
\begin{equation} 
\label{b_system_eq}
\begin{aligned} 
& \left( \partial_{ts} \pp{\ell_a}{a_t} + \partial_{ss} \pp{\ell_a}{a_s} - \partial_s \pp{\ell_a}{a} \right) (1+b_s) + \pp{\ell_a}{\lambda} \lambda_s + m_t + u m_s + 2 m u_s  =0 
\\ 
& a= b_s\,, \quad m:= \pp{\ell_a}{u} 
\\ 
& b_t + u (1+ b_s) =0 \, , \quad 
\lambda_t + u \lambda_s =0 \, . 
\end{aligned} 
\end{equation} 
We can also derive  equation \eqref{b_system_eq} using the Euler-Lagrange theory of variations with respect to the quantity $b$. 
} 
We notice that the variations of the velocity in terms of $\de b$ are given as 
\begin{equation} 
u=-\frac{b_t}{1+b_s} \, , 
\quad 
\Rightarrow 
\quad 
\de u=- \frac{\de b_t}{1+b_s} + \frac{b_t }{(1+b_s)^2} \de b_s = - \frac{1}{1+b_s} \left( \de b_t + u \de b_s \right) 
\label{u_eq_b}
\end{equation} 
and hence 
\begin{equation}
\partial_t \lambda + u \partial_s \lambda =0 \, , 
\quad 
\Rightarrow 
\quad 
\partial_t \lambda - \frac{b_t}{1+b_s} \partial_s \lambda =0 \, , 
\quad 
\Rightarrow 
\quad 
\de \lambda =\frac{ \delta b}{1+b_s} \partial_s \lambda \, . 
\label{lambda_eq_b}
\end{equation} 
Starting with the Lagrangian $\ell_a = \int \ell_{0,a}( a, a_t, a_s, u, \lambda) \mbox{d} s$, and remembering that $a=b_s$, we compute the Euler-Lagrange equations from the critical action principle 
as 
\begin{equation} 
\label{crit_action_b} 
\begin{aligned} 
\de S&= \int \left( \pp{\ell_{0,a}}{a_t} \de b_{st} + \pp{\ell_{0,a}}{a_s} \de b_{ss} + \pp{\ell_{0,a}}{a} \de b_s + 
\pp{\ell_{0,a}}{\lambda} \de \lambda + \pp{\ell_{0,a}}{u} \de u \right) \mbox{d} s \mbox{d} t
\\ 
&=\int 
 \left(  \partial^2_{st}\pp{\ell_{0,a}}{a_t}+\partial^2_{ss}\pp{\ell_a}{a_s} - 
 \partial_s \pp{\ell_{0,a}}{a} 
 \right. 
 \\ & \qquad \qquad \left. + \partial_t m_1 + \partial_s (m_1 u) +\frac{\lambda_s}{1+b_s} \pp{\ell_{0,a}}{\lambda} \right) \de b \mbox{d} s \mbox{d} t
 \\ 
 & \mbox{where} \quad m_1:=\frac{1}{1+b_s} \pp{\ell_{0,a}}{u} 
\end{aligned} 
\end{equation} 
Hence, since $\de b$ is arbitrary, the Euler-Lagrange equation corresponding to the variations with respect to $b$ gives 
\begin{equation} 
\begin{aligned} 
& \partial^2_{st}\pp{\ell_{0,a}}{a_t}+\partial^2_{ss}\pp{\ell_{0,a}}{a_s} - 
 \partial_s \pp{\ell_{0,a}}{a} + \frac{1}{1+b_s} \pp{\ell_{0,a}}{\lambda} + \partial_t m_1 + \partial_s (m_1 u)=0 \, , 
\\ 
&
 m_1=\frac{1}{1+b_s} \pp{\ell_a}{u} \, . 
 \end{aligned} 
\label{m_1_eq} 
\end{equation} 
We can simplify \eqref{m_1_eq} further and bring it to a more compact form.  Notice that due to the conservation law for volume \eqref{lambda_A_cons} for $A=A_0 (1+a)=A_0 (1+b_s)$, we have 
\begin{equation} 
\partial_t (1+b_s) +  \partial_s u (1+b_s) =0 \, , \quad \Rightarrow 
\quad 
\partial_t b+ u (1+b_s) = C(t)  \, , 
\end{equation} 
see also \eqref{A_cons_gen} for more general derivation of a non-constant initial cross-section $A_0$. 
In the expression above, we will set the integration function $C(t)=0$ assuming appropriate boundary conditions. This is the simplest choice we shall use in the remainder of the paper. 
Multiplying equation \eqref{m_1_eq} by $(1+b_s)$ and using \eqref{m_1_eq}, we can write the complete system as 
\begin{equation} 
\label{b_system_eq_0}
\begin{aligned} 
& \left( \partial_{ts} \pp{\ell_{0,a}}{a_t} + \partial_{ss} \pp{\ell_{0,a}}{a_s} - \partial_s \pp{\ell_{0,a}}{a} \right) (1+b_s) 
\\ 
& \qquad \qquad + \pp{\ell_{0,a}}{\lambda} \lambda_s + m_t + u m_s + 2 m u_s  =0 
\\ 
& a= b_s\,, \quad m:= \pp{\ell_{0,a}}{u} \, , \quad u= - \frac{b_t}{1+b_s} 
\\ 
& 
\lambda_t + u \lambda_s =0 \, . 
\end{aligned} 
\end{equation} 
We can now make approximations in the Lagrangian directly, and choosing any physically relevant Lagrangian function $\ell_{0,a}$ in \eqref{b_system_eq_0} will produce a consistent system with the Lagrangian structure. To be concrete, let us take the potential energy of the tube's wall to be $\frac{1}{2} ( P a_s^2 + Q a^2 + F(a)) $, where $F(a)$ is a smooth function describing the higher-order terms in the elastic potential, for example, $F(a)=Q_2 a^4$.  Thus, $F(a) = o(a^2)$ as $a \rightarrow 0$.   The first two terms of this potential energy are in fact the elastic part of the potential energy used in the generalized string model, see \emph{e.g.} \cite{QuTuVe2000,FoLaQu2003}, which is well-accepted in the literature. The last term $F(a)$, corresponding to the nonlinear elasticity, is one way to stabilize the dynamics in the unstable regime, as we shall see below. For further calculations, we take the Lagrangian to be 
\begin{equation} 
\label{lagr_a} 
\ell_{0,a}= \frac{1}{2} \int K a_t ^2 - P a_s^2 - Q a^2 - 2 F(a) + \rho A_0 (1+a) u^2 + \rho \Phi(a) \lambda^2  \mbox{d} s\, . 
\end{equation} 
This Lagrangian is, in essence, a particular version of \eqref{ell_A_explicit} reformulated in terms of relative deviation of the area $a$ instead of the area $A$ itself. The first term in \eqref{lagr_a} is the kinetic energy of the wall deformation; the second and third terms are the potential energy of the wall deformation. The fourth term is the kinetic energy of the fluid $\frac{1}{2} \rho A u^2$ and the last term is the kinetic energy of fluid rotation, with $\Phi(a)$ being a dimensionless quantity (shape function) of order 1. Technically, $K=K(a)$, $P=P(a)$ and $Q=Q(a)$ are functions of $a$, however, in the approximation \eqref{lagr_a} we have put them to be constants for simplicity since we assumed $a$ to be sufficiently small during the evolution. In that case, the equations \eqref{b_system_eq_0} become 
\begin{equation} 
\label{b_system_eq_explicit}
\begin{aligned} 
& \left[ K b_{ttss} - P b_{ssss} +Q b_{ss} + (F'(a))_s -\frac{1}{2} \rho  \partial_s  \left( A_0 u ^2 + \Phi'(a) \lambda^2 \right)   
\right] (1+b_s) 
\\ & \qquad + \rho \Phi (a) \lambda \lambda_s + m_t + u m_s + 2 m u_s  =0  \, ,  \\ 
&   \quad u= - \frac{b_t}{1+b_s} \, , \quad m:= \dede{\ell_a}{u} = \rho A_0 (1+b_s) u = - \rho A_0 b_t \, , 
\\
& \quad 
\lambda_t + u \lambda_s =0 \, . 
\end{aligned} 
\end{equation} 
Notice that we do not have a pressure appearing in \eqref{b_system_eq_explicit} since the constraint 
\eqref{a_b_constr} is enforced explicitly by substitution in the Lagrangian. 
The fluid momentum equation, \emph{i.e.}, the first equation of \eqref{b_system_eq_explicit}, now becomes a single equation for the quantity $b$, which is coupled with the evolution equation for $\lambda$. After some algebra, we obtain 
\rem{ 
\begin{equation} 
\begin{aligned} 
&\left( K b_{ttss} - P b_{ssss} +  (Q - \rho \Phi''(b_s)  \lambda_0^2  \right)  b_{ss} + (F'(b_s))_s  (1+b_s) 
\\
& \qquad + \rho A_0 \left( - b_{tt} -\frac{1+b_s}{2}  \partial_s \left( \frac{b_t}{1+b_s} \right)^2  +\frac{b_t}{1+b_s}  b_{st} + 2 b_t \partial_s  \frac{b_t}{1+b_s} \right) =0 \, . 
\end{aligned} 
\label{single_b_eq0} 
\end{equation} 
The nonlinear terms in \eqref{single_b_eq0} can be simplified to give, after some algebra
}
\begin{equation} 
\begin{aligned} 
&\left( K b_{ttss} - P b_{ssss} +\left( Q- \rho \Phi''(b_s) \lambda_0^2  \right)  b_{ss} + (F'(b_s))_s\right) (1+b_s) 
\\
& \qquad  \qquad + \rho A_0 \left( - b_{tt}  + \partial_s \frac{b_t^2 }{1+b_s}  \right) +
 \rho \big( \Phi(b_s) - \Phi'(b_s) (1+b_s) \big) \lambda_s  =0 \, 
 \\ 
 & \lambda_t = \frac{b_t}{1+b_s} \lambda_s \, . 
\end{aligned} 
\label{single_b_eq_lam} 
\end{equation} 
System \eqref{single_b_eq_lam} defines two coupled PDEs for the unknowns $b$ and $\lambda$. It is interesting to compare this system with the one obtained by the Euler-Poincar\'e method \eqref{inext_unshear_A}, which explicitly involves pressure, and has to be solved as a system of four equations for $(u,\lambda,A,p)$. Notice also that \eqref{inext_unshear_A} does not define an evolution equation for pressure $p$, which has to be found in order to enforce the incompressibility. In our opinion, equation \eqref{single_b_eq_lam} is advantageous for both theoretical and numerical studies, since there is no need to compute the pressure term.  
\rem{ 
Interestingly, because of the fluid momentum defined as $m:=\dede{\ell_a}{u}$ as in the second line of \eqref{b_system_eq_explicit}, there is a simple connection between the fluid momentum $m$ and fluid velocity $u$ and the quantity $b$. Indeed, the equation 
$ b_{st} + \partial_s u (1+ b_s) =0$ from the third line of \eqref{b_system_eq_explicit} integrates once to give   
\begin{equation} 
 b_{t} =-   u (1+ b_s) =0 \, , \quad \Rightarrow m= -\rho A_0 b_t \, , \quad u=-\frac{b_t}{1+b_s} \, . 
\label{m_u_b} 
\end{equation} 
In the above equation, we have set the integrating function to $0$ since we have assumed that as $s \rightarrow \pm \infty$, $b \rightarrow 0$ and $m \rightarrow 0$. For the case of fluid moving with constant speed $U$ at infinity, we  have the Lagrangian mapping $\varphi(a,t)  \rightarrow a+ U  t=s$, so $\varphi^{-1} \rightarrow s-U t$ and $b \rightarrow - U t$. Thus,   $b_t \rightarrow -U$, $u \rightarrow U$ and $b_s \rightarrow 0$ as $s \rightarrow \pm \infty$, so \eqref{m_u_b} remains valid in that case as well. 
} 
\subsection{The case of constant vorticity} 
Let us now consider the case of Section~\ref{sec:uniform_lam}, where the circulation generated by the source at the origin of the tube is held constant by a mechanical device, so $\lambda=\lambda_0$. 
The general Lagrangian with $\lambda=\lambda_*=$const in \eqref{m_1_eq} gives 
\begin{equation} 
\partial^2_{st}\pp{\overline{\ell}_{0,a}}{a_t}+\partial^2_{ss}\pp{\overline{\ell}_{0,a}}{a_s} - 
 \partial_s \pp{\overline{\ell}_{0,a}}{a}  + \partial_t m_1 + \partial_s (m_1 u)=0 \, , 
\, 
 m_1=\frac{1}{1+b_s} \pp{\overline{\ell}_{0,a}}{u} \, . 
\label{m_1_eq_0} 
\end{equation} 
where, as before, we have denoted $\overline{\ell}_{0,a}=\ell_{0,a}$ to be the Lagrangian function $\ell_{0,a}$ evaluated at $\lambda=\lambda_0$. Similarly, for a particular choice of the Lagrangian  $\overline{\ell}_{0,a}$ given by \eqref{lagr_a}, the last equation of \eqref{single_b_eq_lam} describing the evolution of $\lambda$ is satisfied identically, and the equation reduces to a single PDE for the variable $b(s,t)$.  We shall mostly analyze a particular form \eqref{single_b_eq_lam} in what follows; however, the conservation of energy we will derive will be valid for an arbitrary functions $\overline{\ell}_{0,a}$.

If we choose the relevant time and length scales $T$ and $L$ respectively, and defined the non-dimensionalized variables according to 
\begin{equation} 
\label{scaling_eqs}
\begin{aligned} 
K_*& =\frac{K}{\rho A_0 L^2} \, , \quad P_*= \frac{P T^2}{\rho A_0 L^4} \,, \quad 
Q_*:= \frac{Q T^2}{ \rho A_0 L^2} \, , 
\\
 \Phi_*(a) & = \frac{A_0}{L^2} \Phi(a) \, , \quad 
\lambda_* = \frac{T}{L^2} \lambda_0 \, , \quad 
f(b_s):=\frac{F'(b_s)}{K T^2 L^2}
\end{aligned} 
\end{equation} 
We will choose the length and time scales so that $K_*=1$ and $P_*=1$ in \eqref{scaling_eqs}, although other choices of time and length scales are also possible. 
Combining the coefficients containing the background vorticity $\lambda_*$ in a single coefficient $\zeta$ defined as 
\begin{equation} 
\zeta(a):= Q_*- \Phi_*''(a) \lambda_*^2 \, . 
\label{zeta_def} 
\end{equation} 
For example, for the cross-section approximation of the point vortex  \eqref{Phi_Vortex}, we obtain 
\begin{equation} 
\zeta(a) = Q_*+ \frac{\lambda_*^2}{2 \pi (1+a)^2} >0 
\label{P_V_approx} 
\end{equation} 
Keeping the notations $b$, $t$ and $s$ for the non-dimensionalized $b$, time and $s$, and performing some algebra, we obtain 
\rem{ 
\begin{equation} 
\begin{aligned} 
& \left( b_{ttss} -  b_{ssss} + \zeta(b_s)   b_{ss} + \partial_s f(b_s)\right) (1+b_s)  - b_{tt}
 +\frac{b_t}{1+b_s}  b_{st} + b_t \partial_s  \frac{b_t}{1+b_s} =0  \, . 
\end{aligned} 
\label{single_b_eq_non_dim2} 
\end{equation} 
The last two terms could be combined to give 
} 
\begin{equation} 
\begin{aligned} 
& \left( b_{ttss} -  b_{ssss} + \zeta(b_s) b_{ss}+ \partial_s f(b_s) \right) (1+b_s)  - b_{tt}
 +  \partial_s  \frac{b_t^2 }{1+b_s} =0  \, . 
\end{aligned} 
\label{single_b_eq_non_dim} 
\end{equation} 
Equation \eqref{single_b_eq_non_dim} gives, as far as we know, a new result expressing the evolution of the collapsible tube with compliant walls carrying vorticity in terms of back-to-labels mab $b(s,t)$. While it is difficult to observe the variable $b$ in an experiment, its spatial derivative $a=b_s$ can readily be observed. The rest of the paper is dedicated to the analysis of \eqref{single_b_eq_non_dim}, both in general case, and in the asymptotic approximations of long wavelengths and large times. 

We start with the analysis of the full equation \eqref{single_b_eq_non_dim} and its general counterpart \eqref{m_1_eq} first. 

\subsection{Energy conservation and additional constant of motion} 
\paragraph{Energy conservation} 
Let us define $\ell_{0,b}$ to be $\overline{\ell}_{0,a}$, \emph{i.e.}, the Lagrangian function $\overline{\ell}_{0,a}$ with the substitution $a=b_s$, $a_t=b_{st}$, $a_s=b_{ss}$. Since \eqref{m_1_eq_0} is derived as an Euler-Lagrange equation for the Lagrangian 
\begin{equation} 
\ell_b=\int \ell_{0,b} (b_{st}, b_t, b_s, b_{ss}) \mbox{d} s\, , 
\label{ell_b_def}
\end{equation} 
it conserves the energy-like function as stated in the following 
\begin{lemma}[Energy conservation] 
\label{lemma:energy} 
{\rm 
For appropriate periodic boundary conditions, the equation \eqref{single_b_eq_non_dim} conserves the quantity 
\begin{equation} 
E=\int e \mbox{d}s \, , \quad  e:= b_{st} \pp{\ell_b}{b_{st}} + b_t \pp{\ell_b}{b_t} - \ell_b  \, .  
\label{E_def} 
\end{equation} 
} 
\end{lemma}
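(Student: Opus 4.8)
The statement is a Noether-type conservation law associated with invariance under time translation: the integrand $\ell_{0,b}$ in \eqref{ell_b_def} depends on $b$ only through the derivatives $b_t,b_s,b_{st},b_{ss}$ and carries no explicit dependence on $t$. The plan is therefore to differentiate $E$ from \eqref{E_def} in time and show that $\partial_t e$ is a pure spatial divergence $\partial_s J$, so that $dE/dt$ collapses to a boundary flux which vanishes under the assumed periodic boundary conditions. I would carry this out directly, without invoking the abstract Noether machinery, since the computation is short once organized correctly.

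First I would record the Euler--Lagrange equation for $b$. As established above, \eqref{single_b_eq_non_dim} (equivalently \eqref{m_1_eq_0} with $\lambda=\lambda_*$) is precisely the Euler--Lagrange equation of the action $\int \ell_b\,\mbox{d}t$ for $\ell_b$ as in \eqref{ell_b_def}, so it takes the higher-order form
\[
\partial_t\pp{\ell_{0,b}}{b_t} = -\,\partial_s\pp{\ell_{0,b}}{b_s} + \partial_t\partial_s\pp{\ell_{0,b}}{b_{st}} + \partial_s^2\pp{\ell_{0,b}}{b_{ss}}\,.
\]
Next I would expand $\partial_t e$ by applying the product rule to the two momentum terms $b_t\,\pp{\ell_{0,b}}{b_t}$ and $b_{st}\,\pp{\ell_{0,b}}{b_{st}}$, and expand $\partial_t \ell_{0,b}$ by the chain rule, which is legitimate exactly because $\ell_{0,b}$ is autonomous in $t$. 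The contributions $b_{tt}\,\pp{\ell_{0,b}}{b_t}$ and $b_{stt}\,\pp{\ell_{0,b}}{b_{st}}$ cancel between these expansions, leaving
\[
\partial_t e = b_t\,\partial_t\pp{\ell_{0,b}}{b_t} + b_{st}\,\partial_t\pp{\ell_{0,b}}{b_{st}} - b_{st}\pp{\ell_{0,b}}{b_s} - b_{sst}\pp{\ell_{0,b}}{b_{ss}}\,.
\]

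Then I would substitute the Euler--Lagrange equation for $\partial_t\,\pp{\ell_{0,b}}{b_t}$ and, using $b_{st}=\partial_s b_t$ and $b_{sst}=\partial_s^2 b_t$ throughout, reorganize everything into $\partial_s J$ via three groupings: the pair $-\,b_t\partial_s\pp{\ell_{0,b}}{b_s}-b_{st}\pp{\ell_{0,b}}{b_s}=-\partial_s\!\big(b_t\,\pp{\ell_{0,b}}{b_s}\big)$; the mixed-derivative pair $b_t\partial_t\partial_s\pp{\ell_{0,b}}{b_{st}}+b_{st}\partial_t\pp{\ell_{0,b}}{b_{st}}=\partial_s\!\big(b_t\,\partial_t\pp{\ell_{0,b}}{b_{st}}\big)$; and the second-derivative pair $b_t\partial_s^2\pp{\ell_{0,b}}{b_{ss}}-b_{sst}\pp{\ell_{0,b}}{b_{ss}}=\partial_s\!\big(b_t\,\partial_s\pp{\ell_{0,b}}{b_{ss}}-b_{st}\,\pp{\ell_{0,b}}{b_{ss}}\big)$. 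Hence $\partial_t e=\partial_s J$, and integrating over $s$ with the periodic boundary conditions discards $[J]$, giving $dE/dt=0$.

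The main obstacle is the bookkeeping of the mixed and second spatial derivatives: the $b_{ss}$ contribution only telescopes into a divergence after what is effectively a double integration by parts, and one must consistently use $b_{st}=\partial_s b_t$ and $b_{sst}=\partial_s^2 b_t$ to see every cancellation. The only non-algebraic input is the vanishing of the flux $J$ at the endpoints, which is exactly what the periodicity assumption supplies; I would state explicitly which boundary terms it controls. Finally, since nothing in the argument uses the explicit form \eqref{lagr_a}, the same computation carried out for a general autonomous $\overline{\ell}_{0,a}$ proves the result for arbitrary admissible Lagrangian functions, as claimed in the statement.
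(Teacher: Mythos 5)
Your proof is correct and follows essentially the same route as the paper: differentiate the energy density $e$ in time, use the chain rule (valid because $\ell_{0,b}$ has no explicit $t$-dependence), isolate the Euler--Lagrange expression multiplying $b_t$, and collect the remainder into a spatial divergence $\partial_s J$ that integrates to zero under periodicity. Your bookkeeping of the $b_{ss}$ contribution is in fact slightly more careful than the paper's displayed flux, which omits the term $b_t\,\partial_s\pp{\ell_b}{b_{ss}}$ from $J_e$; otherwise the two arguments coincide.
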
 
\begin{proof} 
We prove this theorem by computing the evolution of energy density $e$ defined by \eqref{E_def}. 
\begin{equation} 
\label{e_cons_deriv}
\begin{aligned} 
\partial_t e & = b_{stt} \pp{\ell_b}{b_{st}} + b_{tt} \pp{\ell_b}{b_t} + 
b_{st} \partial_t \left( \pp{\ell_b}{b_{st}} \right) + b_t \partial_t \left( \pp{\ell_b}{b_{t}} \right) - \frac{d\ell_b}{dt}  
\\ & = b_t \left[- \partial_{st} \pp{\ell_b}{b_{st}} +\partial_t \pp{\ell_b}{b_t} +  \partial_s \pp{\ell_b}{b_s}  
- \partial_{ss} \pp{\ell_b}{b_{ss}} \right] 
\\ & \qquad \qquad \qquad \qquad + 
\partial_s \left( b_t \partial_t \pp{\ell_b}{b_{st}} -b_t  \pp{\ell_b}{b_{s}} -b_{st}  \pp{\ell_b}{b_{ss}} \right) 
\end{aligned} 
\end{equation} 
The expression in the square brackets is exactly the Euler-Lagrange equation for the Lagrangian \eqref{ell_b_def}, so it vanishes identically. Thus, we have $\partial_t e + \partial_s J_e=0$ where the energy flux $J_e$ is defined by the equation in the parenthesis on the last line of \eqref{e_cons_deriv}. Thus, $\int e_t \mbox{d} s$ vanishes for periodic boundary conditions of the conditions where $J_e=0$ on the boundaries, for example, $s \in (- \infty, \infty)$ with all variables tending to $0$ as $s \rightarrow \pm \infty$. 
\end{proof} 

In the case when at the undisturbed state the fluid is moving with a constant velocity $U$, it is is convenient to write \eqref{single_b_eq_non_dim} in the form $b(s,t) = - U t + B(s,t)$ so $B(s,t)=0$ if there are no deviations of the flow from the steady flow $u=U$. Then, \eqref{single_b_eq_non_dim} becomes 
\begin{equation} 
 \left( B_{ttss} -  B_{ssss} + \zeta(B_s) B_{ss}+ \partial_s f(B_s) \right) (1+B_s)  - B_{tt}
 +  \partial_s  \frac{(B_t-U)^2 }{1+B_s} =0  \, . 
\label{eq_B} 
\end{equation} 
In addition to the energy conservation, one can prove the appropriate equivalent of \eqref{int_gen} reformulated in terms of the back-to-labels map $B$. Namely, we have the following 
\begin{lemma}[Additional constant of motion] 
{\rm For periodic boundary conditions, or boundary conditions where $B \rightarrow 0$ when $s \rightarrow \pm \infty$, equation \eqref{eq_B} conserves the following quantity: 
\begin{equation} 
I= \int B_t + B_{ss} B_{t s} \mbox{d} s \, . 
\label{Cons_law}
\end{equation} 
}
\end{lemma}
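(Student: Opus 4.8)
The plan is to establish conservation by directly computing $\frac{dI}{dt}$ along solutions of \eqref{eq_B} and showing that the result reduces to a sum of total $s$-derivatives which vanish under the stated boundary conditions. First I would differentiate \eqref{Cons_law} in time,
\begin{equation}
\frac{dI}{dt} = \int \big( B_{tt} + B_{sst}B_{ts} + B_{ss}B_{tts}\big)\,\mbox{d} s \, ,
\end{equation}
and dispose of the two derivative terms: the middle term $B_{sst}B_{ts} = \tfrac12 \partial_s(B_{ts}^2)$ is an exact $s$-derivative and drops, while one integration by parts gives $\int B_{ss}B_{tts}\,\mbox{d} s = -\int B_{sss}B_{tt}\,\mbox{d} s$. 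The problem thus collapses to showing $\int B_{tt}(1-B_{sss})\,\mbox{d} s = 0$ on solutions.

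The main obstacle is that substituting $B_{tt}$ from \eqref{eq_B} directly is circular, since the right-hand side of \eqref{eq_B} itself contains $B_{tt}$ through the fourth-order term $B_{ttss}(1+B_s)$. The device that resolves this is to treat the two pieces of $\int B_{tt}(1-B_{sss})\,\mbox{d} s$ separately and engineer a cancellation. For the second piece I would integrate by parts twice, moving both spatial derivatives onto $B_{tt}$, so as to reproduce exactly the fourth-order term of the equation:
\begin{equation}
\int B_{tt}B_{sss}\,\mbox{d} s = \int B_{ttss}\,B_s\,\mbox{d} s \, .
\end{equation}
For the first piece I would integrate the whole equation \eqref{eq_B} over $s$. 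Writing $W := B_{ttss} - B_{ssss} + \zeta(B_s)B_{ss} + \partial_s f(B_s)$, the flux term $\partial_s\big((B_t-U)^2/(1+B_s)\big)$ integrates to a vanishing boundary contribution, and one checks that $\int W\,\mbox{d} s = 0$ since each summand is an exact $s$-derivative (in particular $\zeta(B_s)B_{ss} = \partial_s G(B_s)$ with $G' = \zeta$). Hence $\int B_{tt}\,\mbox{d} s = \int W(1+B_s)\,\mbox{d} s = \int W\,B_s\,\mbox{d} s$.

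Combining the two pieces, the $B_{ttss}$ contributions cancel and I am left with
\begin{equation}
\frac{dI}{dt} = \int \big( W - B_{ttss}\big)B_s\,\mbox{d} s = \int \big(-B_{ssss} + \zeta(B_s)B_{ss} + \partial_s f(B_s)\big) B_s\,\mbox{d} s \, .
\end{equation}
It then only remains to verify that each of the three surviving terms is a total $s$-derivative: $-B_{ssss}B_s = \partial_s\big(\tfrac12 B_{ss}^2 - B_{sss}B_s\big)$; $\zeta(B_s)B_s B_{ss} = \partial_s H(B_s)$ with $H'(x) = \zeta(x)\,x$; and $\partial_s f(B_s)\,B_s$ integrates to zero after one further integration by parts. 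For periodic boundary conditions, or for $B \to 0$ as $s \to \pm\infty$, all boundary terms vanish, so $\frac{dI}{dt} = 0$ and the lemma follows. As a consistency check I would also note that \eqref{Cons_law} is, up to the trivially conserved constant proportional to $U$ and an overall factor fixed by the scaling \eqref{scaling_eqs}, precisely the image of the conserved quantity \eqref{int_gen} under the back-to-labels substitution $\partial \ell/\partial u = \rho A_0(U - B_t)$ and $A_s\,\partial\ell/\partial A_t = K B_{ss}B_{st}$ for the Lagrangian \eqref{lagr_a}; the result could therefore alternatively be deduced from the Theorem on the additional constant of motion already established for \eqref{eq_A}.
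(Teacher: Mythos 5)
Your proof is correct, but it is organized differently from the paper's. The paper's (one-line) argument is local: it asserts that \eqref{eq_B} can be rewritten pointwise in divergence form $\partial_t\left(B_t + B_{ss}B_{ts}\right) + \partial_s J = 0$, which requires exhibiting the flux $J$ explicitly --- e.g.\ noting $B_{ttss}(1+B_s) = -\partial_t(B_{ss}B_{ts}) + \partial_s\bigl(B_{tts}(1+B_s) + \tfrac12 B_{ts}^2\bigr)$ and checking that every remaining term of \eqref{eq_B} is an exact $s$-derivative. You instead work globally: you compute $\frac{dI}{dt}$, reduce it to $\int B_{tt}(1-B_{sss})\,\mbox{d}s$, and then use the $s$-integral of the full equation to evaluate $\int B_{tt}\,\mbox{d}s$, engineering the cancellation of the troublesome $B_{ttss}$ contributions against $\int B_{ttss}B_s\,\mbox{d}s$ obtained by double integration by parts. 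Both arguments ultimately rest on the same observation --- everything other than $\partial_t(B_t+B_{ss}B_{ts})$ is an exact spatial derivative --- but your version sidesteps the need to write down $J$, at the cost of yielding only the global statement rather than a local conservation law $\partial_t \iota + \partial_s J = 0$ (which is the stronger and more standard form). Each step of your computation checks out, including the identities $-B_{ssss}B_s = \partial_s\bigl(\tfrac12 B_{ss}^2 - B_{sss}B_s\bigr)$ and $\zeta(B_s)B_sB_{ss} = \partial_s H(B_s)$. Two minor remarks: for the decaying boundary conditions you implicitly need the relevant derivatives of $B$ (not just $B$ itself) to vanish at infinity, which should be stated; and in your final consistency check the image of \eqref{int_gen} under the back-to-labels substitution is $-\bigl(B_t + B_{ss}B_{ts}\bigr)$ plus a constant (note $\partial\ell/\partial u = -\rho A_0 b_t = \rho A_0(U-B_t)$), so the two conserved quantities agree only up to an overall sign --- harmless, but worth saying precisely.
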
 
\begin{proof} 
This can be seen by re-writing \eqref{eq_B} in the form $\partial_t I+ \partial_s J$ for some quantity $J$ depending on $B$, $B_t$, $B_s$, $B_{ss}$ {\em etc.}. The result is valid for arbitrary functions $\zeta(a)$ and $f(a)$, and arbitrary velocity  $U$. 
\end{proof} 

\subsection{Linear stability and instability} 
Let us now turn our attention to the linear stability analysis of equation \eqref{single_b_eq_non_dim}. 
Since we assume that $f(b_s) = o (b_s)$, we can write the linearized version of that equation, by assuming small amplitude of $b$ with moderate values for $t$- and $s$-derivatives. For convenience, we also define $\zeta_0 := \zeta(0)$ to be the linearized value of $\zeta$ at $b=0$. Then, \eqref{single_b_eq_non_dim} gives 
\begin{equation} 
\label{lin_b_eq} 
\left( b-b_{ss}\right)_{tt} = \left( \zeta_0 b - b_{ss} \right)_{ss}  \, . 
\end{equation}
Note that on the left of that equation, we obtain exactly the Helmholtz operator acting on $b$. On the right of that equation, we have a Helmholtz operator for $\zeta_0>0$ and anti-Helmholtz operator for $\zeta_0<0$. We will consider cases of $\zeta_0$ being possibly positive or negative, keeping in mind that for physical reasons, $\zeta_0>0$ due to the expression \eqref{P_V_approx}. While the systems values $\zeta_0<0$ may exist, their physical origin still remains uncertain, and thus all our numerical solutions will be computed for $\zeta_0>0$ \eqref{P_V_approx} unless explicitly stated. 
 
To compute the dispersion relation for no external flow, assume $b= b_0 e^{i (k s - \omega t)}$ and substitute into \eqref{lin_b_eq}. We obtain:
\begin{equation} 
\omega^2 (1+k^2) = k^2 ( \zeta_0 + k^2) \, , \quad \omega(k) = \pm k \sqrt{ \frac{\zeta_0+k^2 }{1+k^2} } \, . 
\label{disp_rel} 
\end{equation} 
For large $k$, the waves propagate with the constant speed (phase velocity) of $\pm 1$. If $\zeta \geq 0$, all wavelengths are stable. If $\zeta_0<0$, then there is a band of instability $0<k<\sqrt{- \zeta}$ where the waves in that band grow exponentially. Long wavelengths $k \rightarrow 0$ propagate with the phase and group velocities $\sqrt{\zeta_0}$ and short wavelengths 
$k \rightarrow \infty$ propagate with the speed $1$. 

Let us now compute the stability for the case when the base flow has the circulation $\lambda_*$ and velocity $U$ in the dimensionless variables.  The linearization of the terms in the equation \eqref{eq_B} proceeds as follows: 
\begin{equation} 
b=- U t + \epsilon B\, , \quad b_t = -U + \epsilon B_t \, , \quad b_s = B_s \, , \quad \epsilon \ll 1 \, . 
\label{b_lin} 
\end{equation} 
Then, the linearization of \eqref{single_b_eq_non_dim} up to order $\epsilon$ is given by 
\begin{equation} 
\label{b_eq_lin_gen} 
B_{ttss}-B_{ssss}+ (\zeta_0-U^2)  B_{ss} - B_{tt} - 2 U B_{st}=0 \, . 
\end{equation} 
Substitution of $B=e^{i k x-i\omega t}$ gives the dispersion relation connecting $\omega$ and $k$ 
\begin{equation} 
\omega^2 (1+k^2) + 2 U k \omega - k^2 (\zeta_0 -U^2 + k^2) =0 \, . 
\label{disp_rel_eq} 
\end{equation} 
We can thus solve for $\omega=\omega(k)$ as 
\begin{equation} 
\omega= \frac{k}{1+k^2} \left( - U  \pm \sqrt{ U^2 +(1+k^2) (\zeta_0 -U^2+ k^2) } \right) \, . 
\label{disp_rel_U}
\end{equation} 
We want to find the areas of stability and instability, \emph{i.e.}, the values of parameters $U$ and $\zeta_0$ such that ${\rm Im} \omega=0$ (stability) or ${\rm Im} \omega >0$ (instability). Because of the $\pm$ sign in \eqref{disp_rel_U}, it is sufficient to find the values of parameters when $\omega$ becomes complex for instability. 
The square root (discriminant) in \eqref{disp_rel_U} is non-negative if and only if the quadratic polynomial $f(z)=-U^2 z+(1+z)(\zeta_0+z)$ is non-negative for all $z=k^2 \geq 0$. 
Therefore, the system is unstable if and only if 
\begin{equation} 
U^2 \geq {\rm min}_k \frac{(1+k^2)(\zeta_0+k^2)}{k^2} 
\label{u_cond} 
\end{equation} 
Since the minimum of the function on the right-hand side of \eqref{u_cond} is achieved at $k^2 = \sqrt{\zeta_0}$ (we remind that $\zeta_0$ is assumed to be positive), then the condition of instability is given as 
\begin{equation} 
|U| > U_*=1+ \sqrt{\zeta_0}. 
\label{instab_cond_2} 
\end{equation} 
For values of $U$ satisfying \eqref{instab_cond_2}, there is a band of unstable wavelengths of $k$ satisfying \eqref{u_cond}. When $U$ is slightly above $U_*$, the instability band of $k$ is centered around $k_*=\zeta_0$. Thus, at the bifurcation point $U=U_*$, the typical unstable wavelength is $k \sim \sqrt{\zeta_0}$. 

We shall also note that for for the long wavelength $k \rightarrow 0$, equation \eqref{disp_rel_eq} gives $\omega \sim v_{\pm} k$, where the wave speed $v_{\pm} = -U \pm \sqrt{\zeta_0}$. The long wavelength are thus always stable for $\zeta_0>0$. In the short wave limit $k \rightarrow \infty$, equation \eqref{disp_rel_eq} gives $\omega = \sim \pm k$, so the short waves propagate with speeds of $\pm 1$. 

The results of linear stability analysis presented above are summarized on Figure~\ref{fig:stability}. We take $\zeta_0=1$ corresponding to the critical velocity $U_*=2$. The top and the bottom panels of this figure show, respectively, the real part of $\omega(k)$ computed from \eqref{disp_rel_U}. On the left panels, we show the results with $U=1$, which is the stable case. For the stable case, ${\rm Im} \omega(k)=0$ for all $k$. The right panels show the results for $U=3$ which is the unstable case. There is a band of instability centered around $k \sim \zeta_0 =1$ as expected from the discussion after \eqref{u_cond}. 

\begin{figure}[h]
\centering
\includegraphics[width=0.48\textwidth]{./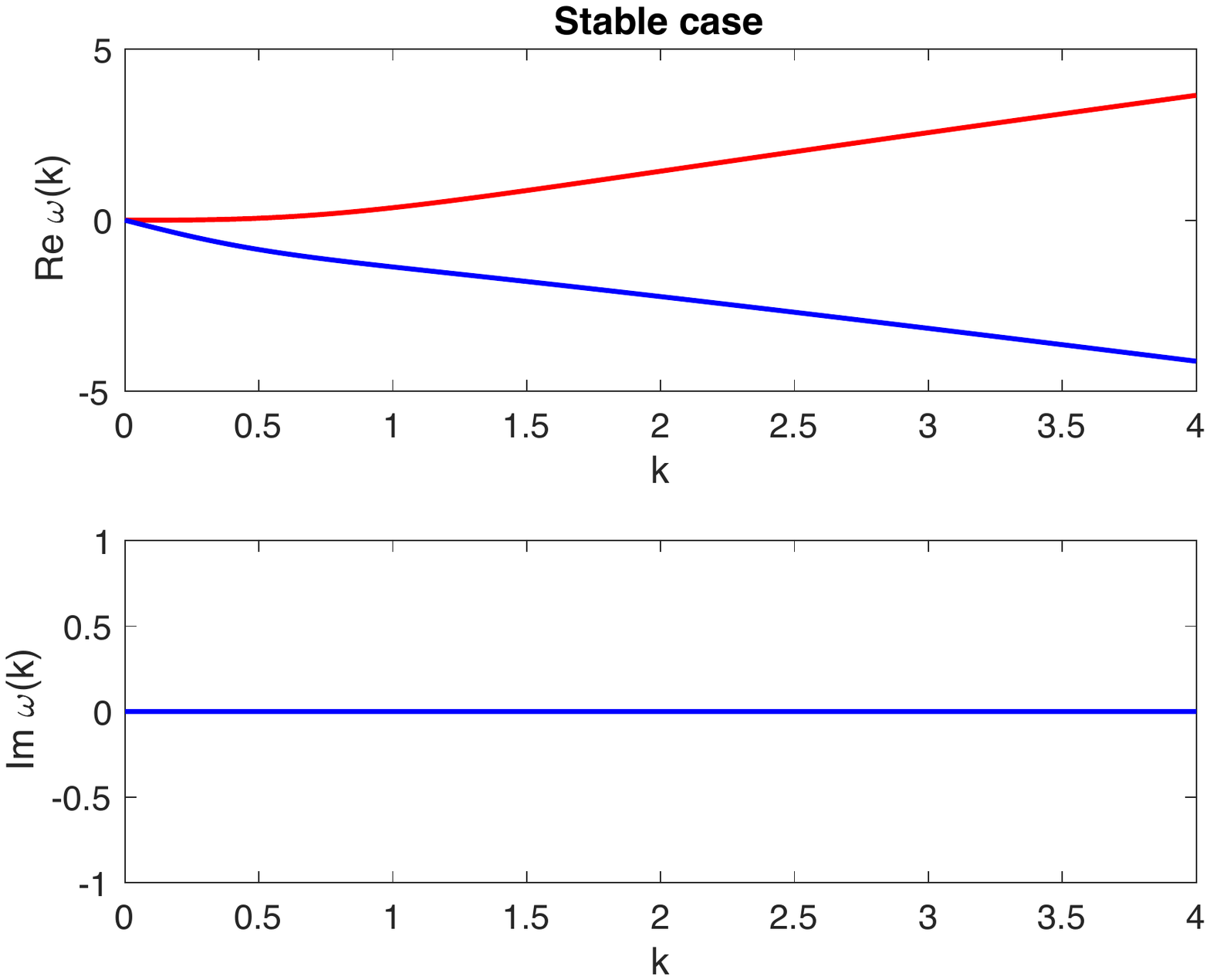}
\includegraphics[width=0.48\textwidth]{./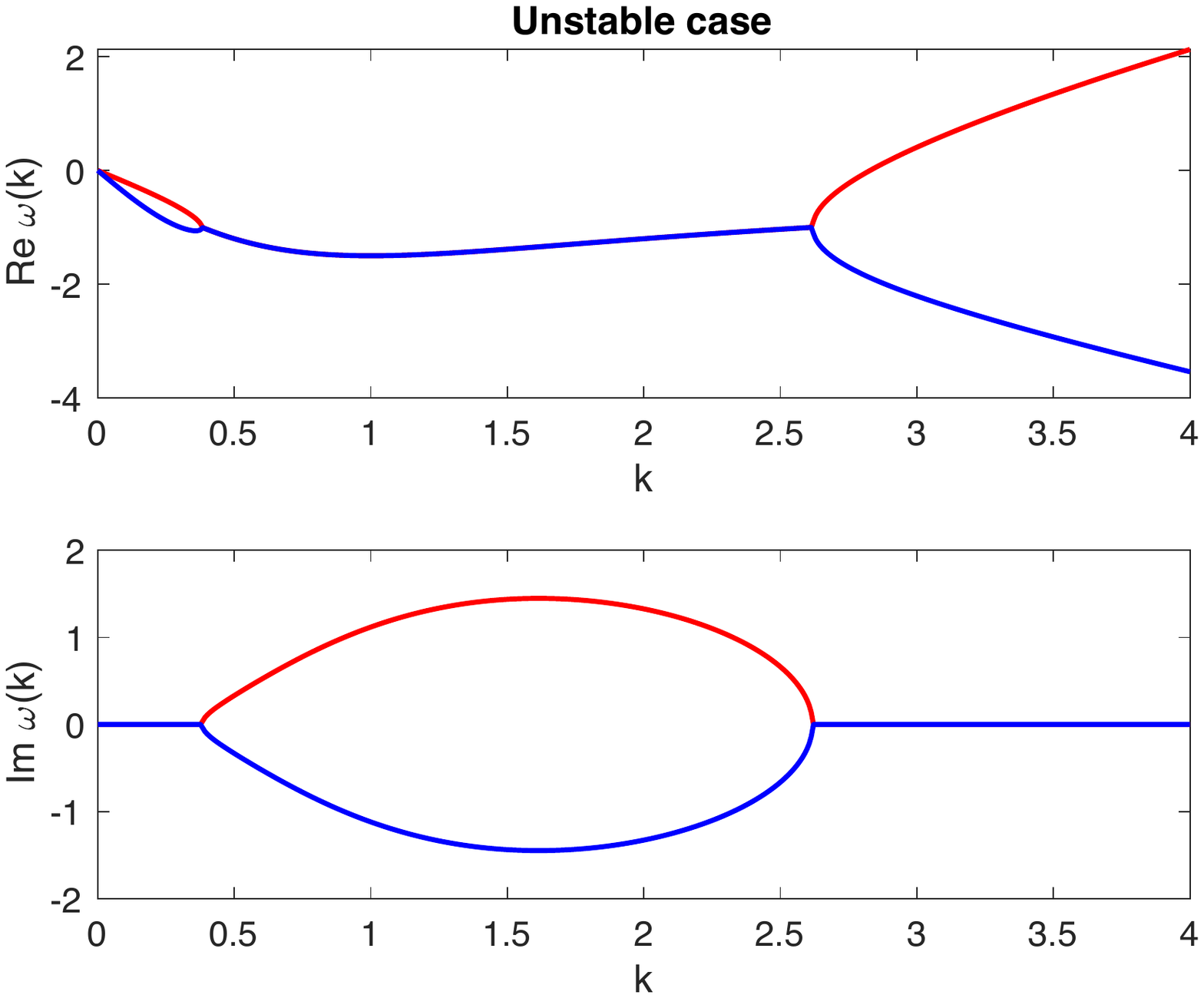}
\caption{Examples of  $\omega(k)$ computed from \eqref{disp_rel_U}, with $\zeta_0=1$. Left panels: $U=1$ (stable case). Right panels: $\zeta_0=0$ (unstable case). Solid red and blue line correspond to the respective choices of the $+$ and $-$ sign in \eqref{disp_rel_U}.}
\label{fig:stability}
\end{figure}

\subsection{Numerical simulations: stable case} 
\rem{ 
For numerical solution, it is convenient to extend \eqref{lagr_a} to avoid potential singularity at $a=-1$, and include the curvature terms proportional to the curvature $(1/R-1/R_0)$. In \eqref{lagr_a}, these terms were approximated by the term $Q a^2$. A more physically relevant expression is 
\begin{equation} 
\label{lagr_a_2} 
\begin{aligned}
\ell_a &= \frac{1}{2} \int K a_t ^2 - P a_s^2 - Q a^2 -F(a) + \rho A_0 (1+a) u^2 + \rho \Phi(a) \lambda^2 \, , 
\end{aligned} 
\end{equation} 
} 
The results of simulations of \eqref{eq_B} with periodic boundary conditions for $a(s,t)=B_s(s,t)$ are presented on Figure~\ref{fig:simulations}. We have taken $Q_*=1$, and $\zeta(a)$ given by \eqref{P_V_approx}. For the stable case, we have also set the nonlinear terms to $0$, \emph{i.e.}, $f(a)=0$. The parameters of the simulations on the Figure are $U=1$ and $\lambda_*=1$, with $\zeta_0 \simeq 1.1592$, corresponding to the stable case.  We present $a=B_s$ since $a(s,t)$ is the physical observable, being the deviation of the relative area from its equilibrium value. 

 On the left panel we present $a=B_s$ inferred from the solution $B(s,t)$ of the full equation \eqref{eq_B} starting with a localized disturbance $B(s,0)=e^{-0.1 s^2}$, $B_t(s,0)=0$.  On the right panel, we present $a=B_s$ inferred from the solution for  the Boussinesq-like approximation of the full equation \eqref{app2}, derived below in Section~\ref{sec:Boussinesq} with the same parameter values and initial conditions as on the left panel. 
\begin{figure}[h]
\centering
\includegraphics[width=0.48\textwidth]{./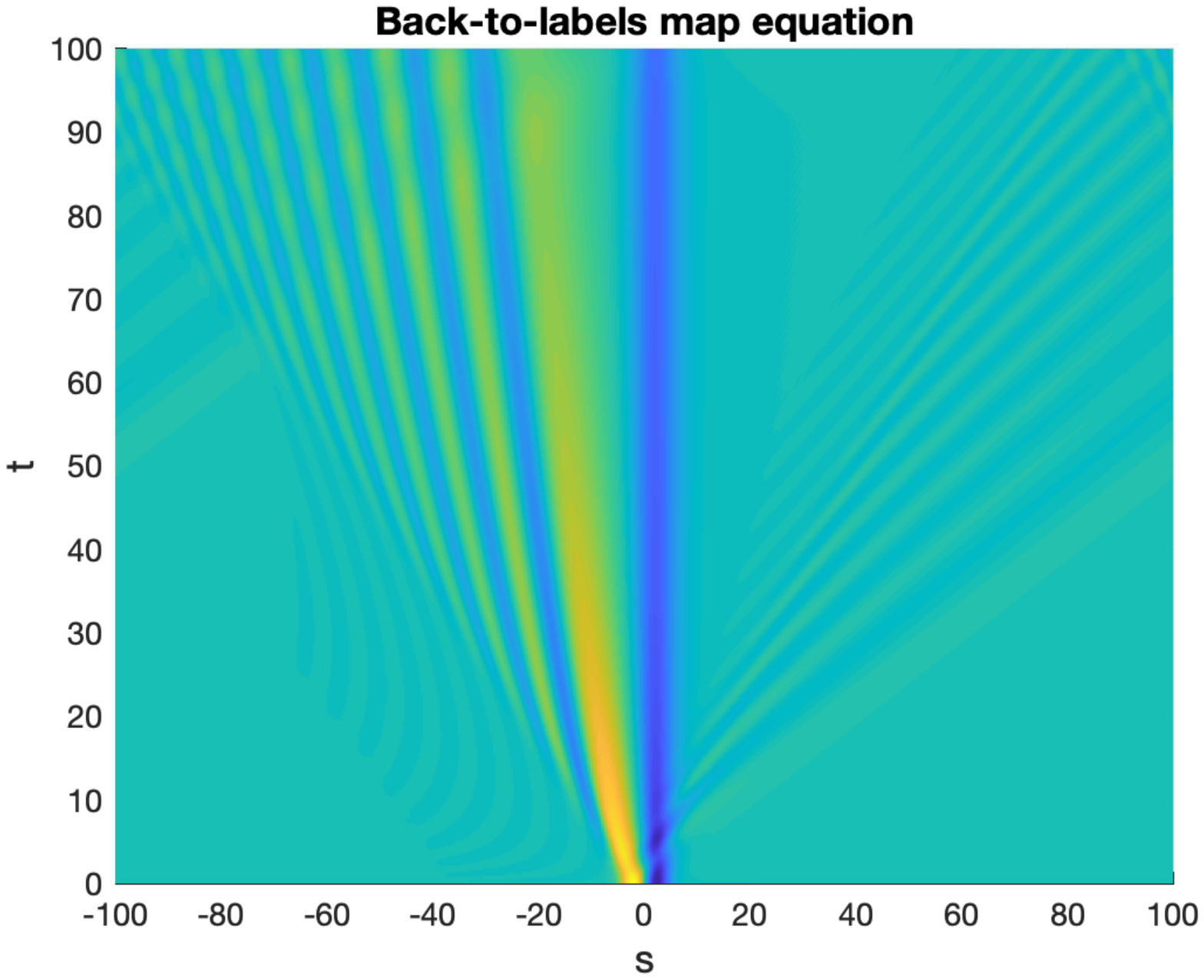}
\includegraphics[width=0.48\textwidth]{./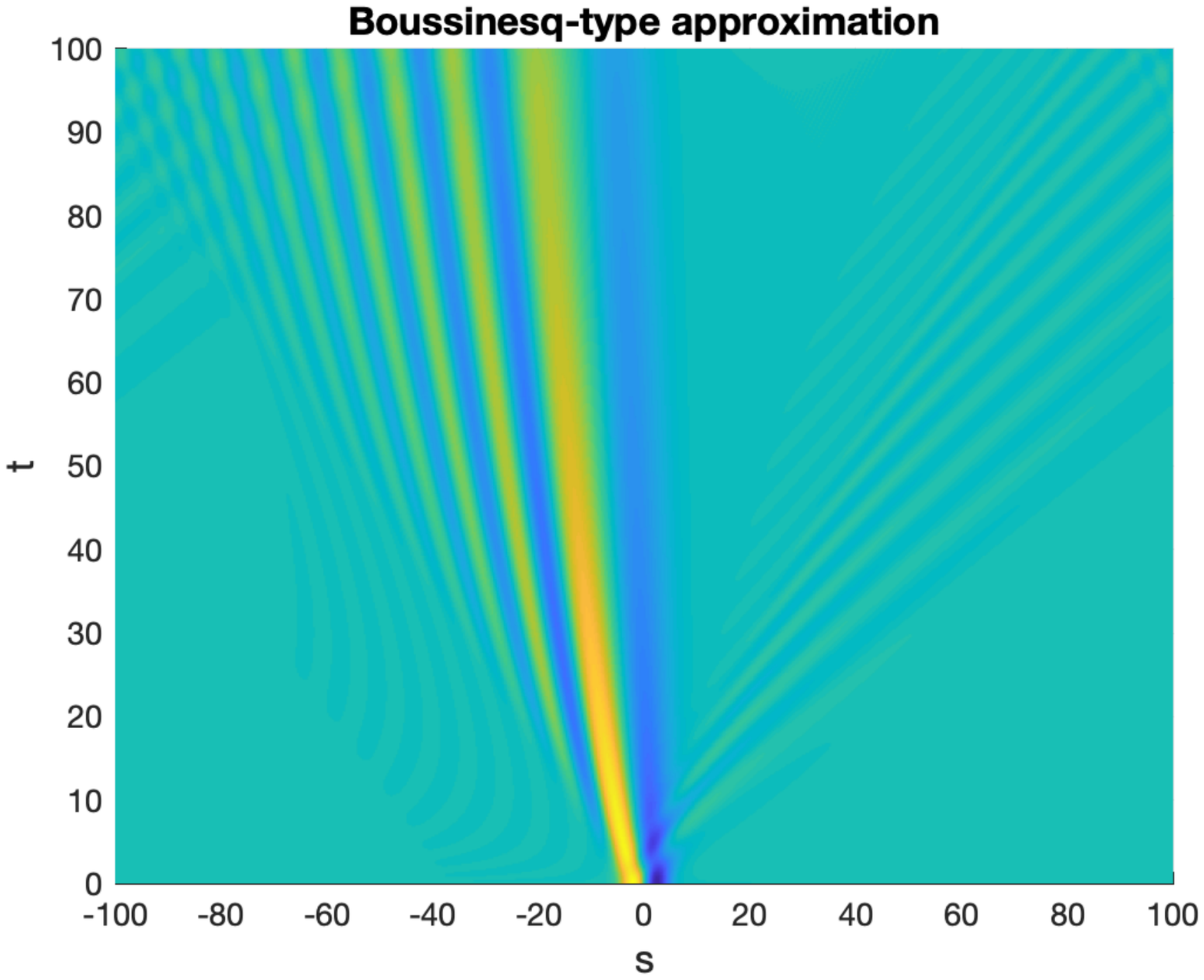}
\caption{Examples of solutions for $a=B_s(s,t)$. Left: full equation \eqref{eq_B}. Right: Boussinesq-like approximation \eqref{app2} (see below).    }
\label{fig:simulations}
\end{figure}
To check the accuracy of our calculations, on Figure~\ref{fig:Cons_Law}, we also show the energy defined by \eqref{E_a} (left panel) and the conserved quantity $I$ defined by \eqref{Cons_law} (right panel). The energy defined in \eqref{E_a}  is preserved to the relative accuracy of $10^{-4}$.  For comparison, on the right panel we also present the terms $B_t$ and $B_{ss} B_{ts}$, (green dotted and blue dashed lines) which are of order one. The constant of motion defined by  \eqref{Cons_law}, shown as a solid red line, is preserved to expected precision, about $10^{-4}$. 
\begin{figure}[h]
\centering
\includegraphics[width=0.48\textwidth]{./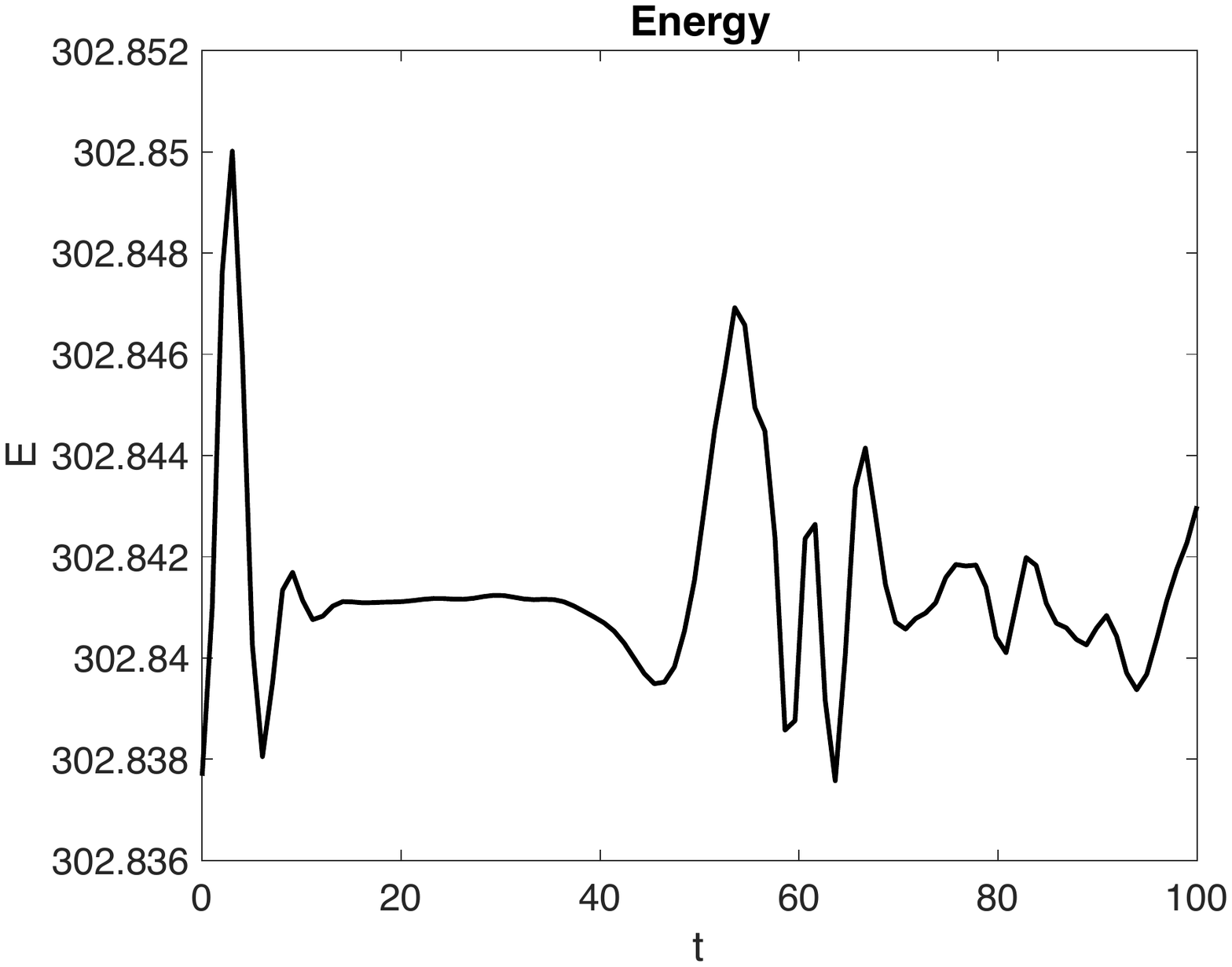}
\includegraphics[width=0.48\textwidth]{./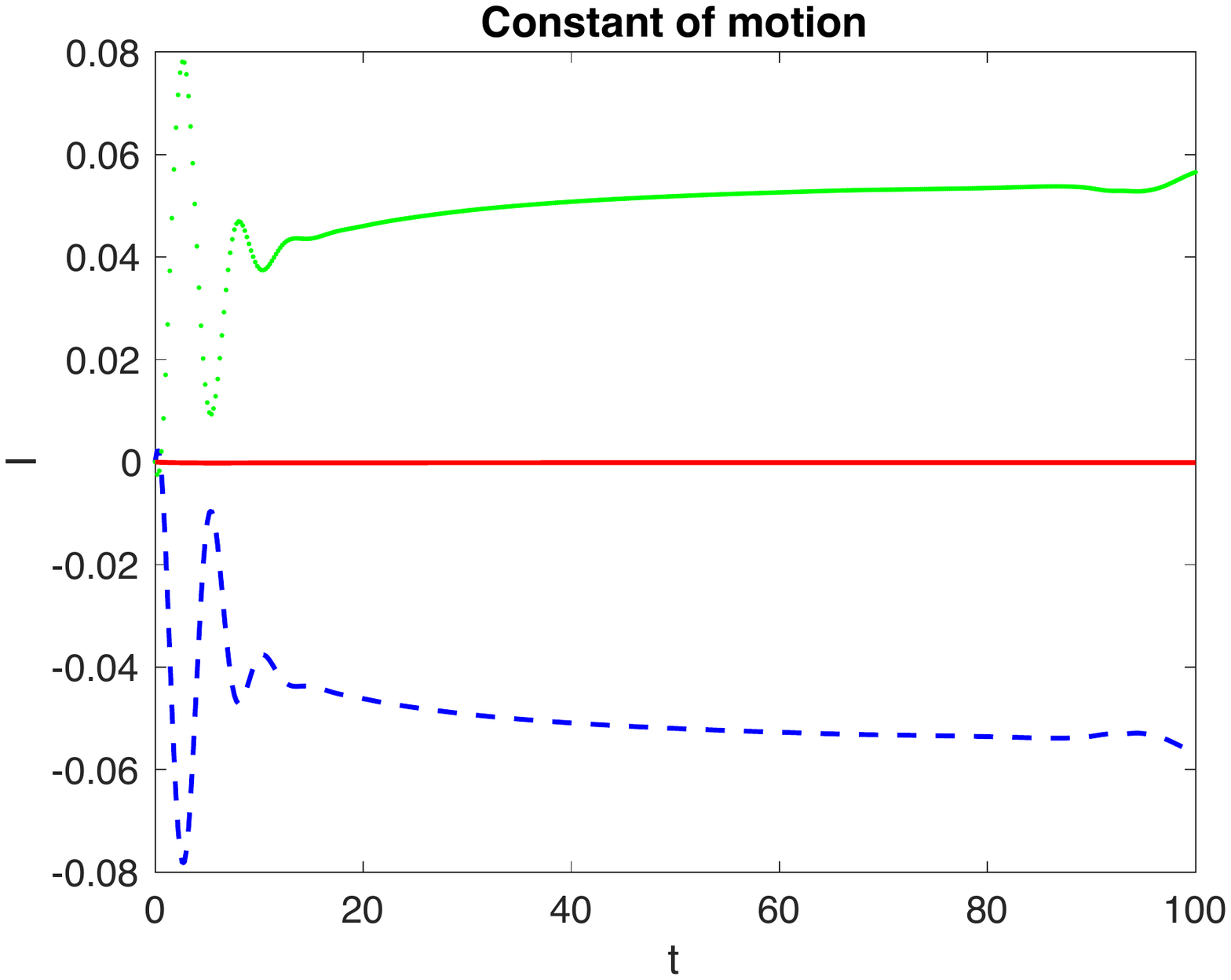}
\caption{Computed value of the conserved quantities for simulation shown on Figure~\ref{fig:simulations}. Left panel: energy $E$ as defined in \eqref{E_def}. Right panel: quantity defined by \eqref{Cons_law} Dashed blue line: $\int B_t \mbox{d} s$. Dotted green line:  $\int B_{ss} B_{ts} \mbox{d} s$. 
Solid red line: the conserved quantity $I=\int B_t + B_{ss} B_{ts} \mbox{d} s$.  }
\label{fig:Cons_Law}
\end{figure}

If $|U|> 1+ \sqrt(\zeta_0)$, the solution becomes unstable, and if there is no potential energy to limit the growth of the solution, it quickly reaches the singular (collapsed) state $a=-1$.  In reality, any physical system will have a potential energy preventing the singularity. One method is to introduce higher derivatives in the elasticity function $F$, and another way is to consider the potential energy preventing both indefinite growth of $a$ and collapse to $a=-1$. We chose the same parameters as in \eqref{fig:simulations} except  
\begin{equation} 
\label{param_unstable} 
U=1.25 + \sqrt{\zeta_0}, \quad   \mbox{ and  } \quad 
f(a) = \frac{d}{d a} \frac{a^4}{(1+a)^2} =  \frac{2 a^3( 2 +a) }{(1+a)^3} 
\end{equation} 
The initial conditions at $t=0$ are taken to be a random perturbation for $B(s,0)$ with the amplitude $10^{-3}$ and $B_t(s,0)=0$. The results of the simulations for $a=B_s$ are demonstrated on Figure~\ref{fig:simulations_unstable}. The solution grows exponentially from the unstable equilibrium state $a=0$, saturating at an amplitude $|a| \sim 0.5$ and yielding complex non-steady dynamics. 
\begin{figure}[h]
\centering
\includegraphics[width=0.8\textwidth]{./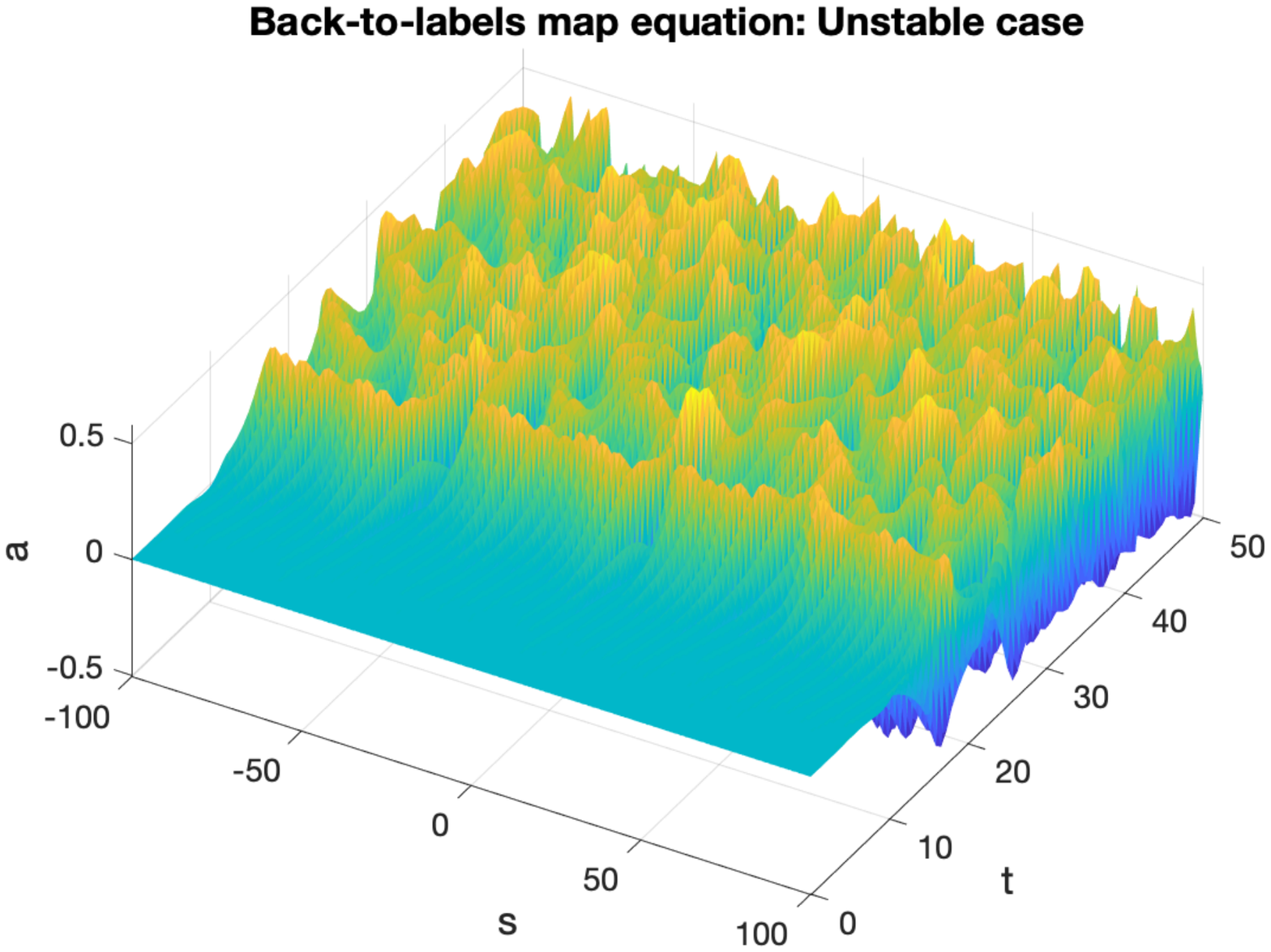}
\caption{ Solution $a=B_s$ for the unstable case $U>1+ \sqrt{\zeta_0}$. Initial conditions are given by small random perturbations of order $10^{-3}$ to the equilibrium state $B=0$.  }
\label{fig:simulations_unstable}
\end{figure}

\section{Asymptotic analysis of equation \eqref{eq_B}} 
\label{sec:approximate_sols}
\subsection{Small cross section (collapse) and reduction to Monge-Amp\`ere equation}
One of the most interesting regimes of evolution occurs when the cross-section almost collapses to 0, in other words 
$ 1+B_s>0$ and $ |1+B_s| \ll 1$. This regime is important for physiological flow applications and is studied quite extensively in the literature, see, for example, recent review \cite{Han-etal-2013} and the references therein. 
We show that in this regime, our equation \eqref{eq_B} is approximated by the celebrated Monge-Amp\`ere equation. 
We proceed as follows. 
For $1+B_s$ being small and positive, and $B_{ss}$, $B_{sstt}$, $B_{ssss}$ remaining finite, the asymptotic equation following from \eqref{eq_B} is 
\begin{equation} 
\label{B_asymptotic} 
B_{tt}=\partial_s\frac{(B_t - U)^2}{1+B_s}.
\end{equation} 
We now use the substitution 
\begin{equation} 
B=K_s(s,t) -s +Ut.
\label{B_subst}
\end{equation}  With this substitution, \eqref{B_asymptotic} integrates once with respect to $s$. Dropping the arbitrary function of time appearing under this integration, we obtain 
the following equation for $K(s,t)$: 
\begin{equation}  
\label{K_eq} 
K_{ss}K_{tt}=K_{st}^2\, , 
\end{equation} 
which is the (real) homogeneous Monge-Amp\`ere equation with $(s,t)$ being independent variables. For the approximation \eqref{B_asymptotic} to be valid, we assume $K$ to be order $1$, and since $B_s+1$ is small, we need that $K_{ss}$ should be small. We can thus assume that $K=K(\epsilon S, \epsilon \tau)$ where $\epsilon \ll 1$ is a small parameter and the variables $S, \tau$ are of order 1. Then the equation \eqref{K_eq} transforms to 
\begin{equation}  
\label{K_eq_2} 
K_{SS}K_{\tau \tau }=K_{S \tau}^2\, , 
\end{equation} 
\emph{i.e.} satisfies the same Monge-Amp\`ere equation. Thus, any solution $K(S, \tau)$ of \eqref{K_eq_2} will generate a solution $K(s,t)$ describing the collapse of the tube. For physical reasons, we need $K_{SS}>0$ for all solutions since $1+B_s = K_{ss}>0$. For that solution, the neglected terms in \eqref{eq_B} are order $\epsilon^3$ or smaller in order than the terms kept in \eqref{B_asymptotic}, so the approximation is consistent. 

As is known from the literature, equation \eqref{K_eq_2} possesses a rich structure of solutions and is integrable \cite{Au1982,Ca1991,GuBr2001}. 
For example, for any smooth function $g(x)$ with $g''(x)>0$,  $K(S, \tau)=g(S-c \tau)$  is a solution of \eqref{K_eq_2} for any constant $c$. Thus, the collapse of the tube is strongly dependent on the initial conditions and is unlikely to be described by a universal law. 

\rem{ 
\section{Large Cross section regime}

In this regime the area $1+B_s \gg 1$ is large (e.g. growing with $t$ exponentially due to instability). Then suppose $\mathcal{O}(B_s)=\mathcal{O}(B_t)\gg 1$ The equation can be written as

$$ B_{ttss}-B_{ssss} + \zeta B_{ss}  -\frac{ B_{tt}}{1+B_s}+ \frac{ 1}{1+B_s}\left( \frac{(-U+B_t)^2}{1+B_s} \right)_s =0$$ 

Then the behavior is determined basically from

$$ B_{ttss}-B_{ssss} + \zeta B_{ss} =0 $$

The dispersion is $$c(k)=\pm\sqrt{1+\frac{\zeta}{k^2}}$$ and if $ \zeta >0 $ all waves propagate in a stable pattern. 

However, if $ \zeta <0 $ then for $0<k^2 < |\zeta|$ there is an instability caused by waves of the type 
$$e^{\pm \sqrt{ |\zeta|-k^2} t}\sin(ks)$$ with exponentially growing amplitudes and this scenario seems to be possible? 
 } 

\subsection{Approximation by Boussinesq-type equations} 
\label{sec:Boussinesq}
In this section, we are interested in a consistent approximation of \eqref{eq_B} for small amplitude and long wavelength approximation which is a common approximation for hydrodynamic system with moving boundaries. Perhaps the most celebrated examples are the Korteveg-de Vries (KdV) and  Boussinesq equations. We start with the reduction of our system to the Boussinesq-type equation. Such reduced models have been used in the analysis of pulse propagation in arteries, see \cite{Ca2012,LiLiBo2014}. 

In order to keep the description self-contained, let us briefly introduce the integrable Boussinesq equation (BE), 
\begin{equation} \label{BE}
u_{tt}=u_{xx}+3(u^2)_{xx}+u_{xxxx}.
\end{equation}
Note that the notations $u, \psi, \lambda$ in this section are not related to variables from the previous sections. We hope no confusion arises from the clash of notations, since these are the variables used commonly in the field of integrable equations. 

Equation \eqref{BE} arises in many various physical applications, most notably in describing propagation of the long waves in shallow water \cite{B1871,B1872}. The BE is integrable by the inverse scattering method, with the Lax pair representation that originates from Zakharov \cite{Z73}:
\begin{align}
4\psi_{xxx}&+ (6 u+1) \psi_x + 3u_x \psi \pm \sqrt{3} i (\partial_x^{-1}u_t)\psi = \lambda \psi ,\nonumber \\
\psi_t&=\mp i \sqrt{3} (\psi_{xx }  +u \psi). \nonumber \\
                   \label{Lax}
\end{align}
Here $\psi$ is the eigenfunction of the spectral problem and $\lambda$ is the spectral parameter. 
 The  solutions are explicitly described in terms of solitons which are computed by Hirota \cite{H73}. 
It is more convenient to rescale the variables in \eqref{BE} 
\begin{equation} 
 u \rightarrow \alpha u, \qquad \partial_t \rightarrow \beta \partial_t, \qquad \partial_x \rightarrow \beta \partial_x 
 \label{rescale_def} 
 \end{equation} 
 where $\alpha$ and $\beta$ are some constants.  The BE equation \eqref{BE} then becomes 
\begin{equation} \label{BE1}
u_{tt}=u_{xx}+3\alpha(u^2)_{xx}+\beta^2 u_{xxxx}.
\end{equation}

Let us now turn our attention to the equation \eqref{eq_B} for $\zeta=$const and no nonlinear terms in the potential, \emph{i.e} $f(a) =0$. We get the following nonlinear equation, written here for reference: 
\begin{equation} 
\label{eq_B_0} 
(B_{ttss}-B_{ssss} + \zeta B_{ss})(1+B_s)- B_{tt}+ \left( \frac{(-U+B_t)^2}{1+B_s} \right)_s =0 \, .
\end{equation} 
This equation will form the basis of derivation of approximate limiting cases in this chapter. 

Let us consider a small amplitude and large wavelength propagation regime, where $B\rightarrow \varepsilon B ,$  $ \partial_t \rightarrow \varepsilon \partial_t,$ $ \partial_x \rightarrow \varepsilon \partial_x $ where $\varepsilon$ is a small parameter that defines the scale for the corresponding quantities. 
By keeping terms up to and including  $\varepsilon ^2$ in \eqref{eq_B_0} we obtain 
\begin{equation} \label{app1}
 (\zeta-U^2) B_{ss} -2UB_{ts} - B_{tt}+ \varepsilon^2 (B_{ttss}-B_{ssss}) + \varepsilon^2 \left( \frac{\zeta}{2} B_s^2 + B_t^2 + 2U B_t B_s  \right)_s =0 .
\end{equation}
The leading order terms in this equation are
\begin{equation} 
 (\zeta-U^2) B_{ss} -2UB_{ts} - B_{tt} =0 
\end{equation}
giving an operator equality to the leading order 
\begin{equation} 
\label{t_deriv_approx}
 \partial_t = (\pm \sqrt{\zeta}-U)\partial_s + \mathcal{O} (\varepsilon) = - c_{\pm} \partial_s + \mathcal{O} (\varepsilon) \, , 
 \quad c_{\pm} := U \pm \sqrt{\zeta} 
 \end{equation}
when the operators are acting on the solution $B$ of \eqref{app1} .  The plus sign is for the right - running waves, the minus sign is for the left running waves. Then, $c_+$ is, to the leading order, the propagation speed of the right-running waves and $c_-$ is the speed of the left-running waves. 

Next, we use the expression for $\partial_t$ from \eqref{t_deriv_approx} to make an approximation of the last term in \eqref{app1} 
\begin{align}
   \label{quad}
   &\frac{\zeta}{2} B_s^2 + B_t^2 + 2U B_t B_s = \frac{\zeta}{2} B_s^2 + (\zeta \mp 2\sqrt{\zeta}U + U^2)B_s^2 + 2U (\pm\sqrt{\zeta}-U)B_s^2 + \mathcal{O}(\varepsilon) \notag \\
   &=\left(\frac{3\zeta}{2}-U^2 \right) B_s^2 + \mathcal{O}(\varepsilon) \notag 
\end{align}
Then, this approximation of \eqref{app1} gives an equation simultaneously valid for both left- and right-running waves:
\begin{equation} \label{app2}
 (\zeta-U^2) B_{ss} -2UB_{ts} - B_{tt}+ \varepsilon^2 (B_{ttss}-B_{ssss}) + \varepsilon^2 \left( \frac{3\zeta}{2}-U^2\right ) (B_s^2)_s =0. 
\end{equation}
This equation provides a universal, Boussinesq-type approximation of equation \eqref{eq_B}. We present a numerical solution of this equation on the right panel of the Figure~\ref{fig:simulations} to compare with the solution of full equation \eqref{eq_B} with the same initial conditions. We observe that there is a very good quantitative agreement between these solutions for this propagation regime. Note that \eqref{app2} is substantially algebraically simpler than \eqref{eq_B}. We do not yet know whether the equation \eqref{app2} is integrable, but from the algebraic structure of this equation, we believe that it is much more amenable to analytical studies than the full \eqref{eq_B}.

While it is not known whether \eqref{app2} is integrable, a further approximation can be made to reduce it to an integrable Boussinesq case. We can make the following approximation using \eqref{t_deriv_approx}: 
\begin{equation} 
\label{4th_deriv_approx} 
B_{ttss}-B_{ssss}=[( \sqrt{\zeta}\pm U )^2 -1]B_{ssss} +\mathcal{O}(\varepsilon) \, . 
\end{equation} 
This approximation,  substituted in  \eqref{app2} gives the following equation: 
\begin{equation} \label{app3}
 (\zeta-U^2) B_{ss} -2UB_{ts} - B_{tt}+ \varepsilon^2 [( \sqrt{\zeta}\pm U )^2 -1]B_{ssss} + \varepsilon^2 \left( \frac{3\zeta}{2}-U^2\right ) (B_s^2)_s =0. 
\end{equation}
Remembering that $a:=B_s$ is the relative deviation of the cross-secitonal area,  we can write \eqref{app3} as 
\begin{equation} 
 (\zeta-U^2) a_{ss} -2Ua_{ts} - a_{tt}+ \varepsilon^2 [( \sqrt{\zeta}\pm U )^2 -1]a_{ssss} + \varepsilon^2 \left( \frac{3\zeta}{2}-U^2\right ) (a^2)_{ss }=0. 
 \label{app4} 
\end{equation}
In order to write \eqref{app4} in the canonical form \eqref{BE}, we make the following change of variables: 
\begin{align}
&x=\frac{s-Ut}{\sqrt{\zeta}} , \qquad \tau= t  \\
&\partial_t= \partial_\tau -\frac{U}{\sqrt{\zeta}}\partial_x, \qquad \partial_s =\frac{1}{\sqrt{\zeta}} \partial_x \, . 
\end{align}
In the new variables, \eqref{app4} transforms to 
\begin{equation} 
\label{app5} 
  a_{xx} - a_{\tau \tau}+ \varepsilon^2 \frac{1}{\zeta^2}[( \sqrt{\zeta}\pm U )^2 -1]a_{xxxx} + \varepsilon^2 \frac{1}{\zeta}\left( \frac{3\zeta}{2}-U^2\right ) (a^2)_{xx }=0, 
\end{equation}
which is of the form 
\begin{equation} \label{BE1_augmented}
u_{tt}=u_{xx}+3\alpha(u^2)_{xx} \pm \beta^2 u_{xxxx} \, . 
\end{equation}
The $+$ sign in front of $\beta^2$ term corresponds to the classical Boussinesq equation \eqref{BE1}, and the $-$ sign corresponds to the alternative sign Boussinesq equation obtained from \eqref{BE1} by the substitution $x \rightarrow i x$ and $t \rightarrow i t$, which is also integrable. 
  The constants $\alpha$ and $\beta$ from \eqref{BE1_augmented} are given by 
\begin{equation} 
\label{alpha_beta_eq} 
\alpha \equiv \varepsilon^2 \frac{1}{\zeta}\left( \frac{3\zeta}{2}-U^2\right ) \, , \quad 
\beta_{\pm}^2 \equiv \varepsilon^2 \frac{1}{\zeta^2} \left|  (\sqrt{\zeta}\pm U )^2 -1 \right|   \, . 
\end{equation} 
Note that including $\alpha=0$ makes the equation \eqref{BE1} a linear equation. While this reduction is interesting, to describe both left- and right-running waves at the same time, we believe that equation \eqref{app2} is more universal since it does not contain the $\pm$ signs.

\subsection{Reduction to Korteweg-deVries (KdV) equation}
It is also interesting to investigate the appropriateness of the use of Korteweg-deVries (KdV) approximation for this problem. While the universal Boussinesq-like equation \eqref{app2} is valid for both left and right-travelling waves, its integrable counterpart  \eqref{app5}  is written separately for the left- and right-running waves. However, each version of equation \eqref{app5} itself has both the left- and the right-running waves. Thus, considering \emph{both} left- and right version of  \eqref{app5} generates too much spurious behavior compared to the full system. 
In order to obtain an adequate integrable model for those waves, we reduce \eqref{eq_B} to the KdV equation for waves running in each direction as follows. The KdV approximation for the fluid flow with elastic walls has been considered in the literature before, especially in the context of solitary wave propagation and variable properties of the arteries \cite{De2007,De2009} and reflection of the waves from the branching points \cite{DuWaWe1997}. Thus, we believe it is important to consider the appropriateness of  KdV  approximation in our case as well. 

We are going to proceed as follows. Suppose that $B$ satisfies a KdV equation of the form
\begin{equation} \label{KdV1}
B_t = - c B_s + \varepsilon^2 p B_{sss} + \varepsilon^2 q \left( \frac{B_s^2}{2}\right) 
\end{equation}
Here $p,q,c$ are yet undetermined constants. A subindex $\pm$ will denote right/left running wave, and we shall enforce $c=c_{\pm}=U\pm \sqrt{\zeta}$ as defined by \eqref{t_deriv_approx}. We will drop the $\pm$ sign in the derivation below for brevity. From \eqref{KdV1} we necessarily have $B_t=-cB_s + \mathcal{O}(\varepsilon^2)$.  Differentiating \eqref{KdV1} with respect to $t$ gives 
\begin{equation} 
\label{KdV2}
B_{tt} = - c B_{st} + \varepsilon^2 p B_{ssst} + \varepsilon^2 q \left( \frac{B_s^2}{2}\right)_{t}= - c B_{st} - \varepsilon^2 p c B_{ssss} - \varepsilon^2 q c \left( \frac{B_s^2}{2}\right)_{s} \end{equation}
Next, differentiating \eqref{KdV1} with respect to $s$ gives 
\begin{equation} \label{KdV3}
B_{ts} = - c B_{ss} + \varepsilon^2 p B_{ssss} + \varepsilon^2 q \left( \frac{B_s^2}{2}\right)_{s} 
\end{equation}
From \eqref{KdV2}, \eqref{KdV3} we arrive to 

\begin{equation} 
\label{KdV4}
\begin{aligned} 
& B_{tt} - c^2 B_{ss} + \varepsilon^2 2pc B_{ssss} + \varepsilon^2 q c\left( B_s^2\right)_{s} =0 
\\
& 2UB_{ts} + 2Uc B_{ss} - \varepsilon^2 2U p B_{ssss} - \varepsilon^2 U q \left( B_s^2 \right)_{s} =0 
\end{aligned} 
\end{equation}
Finally, adding equations in \eqref{KdV4} we obtain 
\begin{equation} \label{KdV6}
 (c^2- 2Uc) B_{ss} - 2 U B _{ts} - B_{tt} - \varepsilon^2 2(c- U) p B_{ssss} - \varepsilon^2 (c-U) q \left( B_s^2 \right)_{s} =0 \end{equation}
which matches \eqref{app3}. Indeed, we take $c^2- 2Uc= \zeta - U^2  $,  or $c=c_{\pm}=U \pm \sqrt{\zeta}$, as expected for the speeds of the left- and right-running waves \eqref{t_deriv_approx}.  For the parameters $p$ and $q$, we obtain 
\begin{equation} 
\begin{aligned} 
 - 2p(c- U) & =c^2 -1, \qquad p=-\frac{c^2-1}{2(c-U)}\, , \qquad p_{\pm} = -\frac{c_{\pm}^2-1}{\pm 2\sqrt{\zeta}} ,
\\
 - (c-U) q  & = \frac{3\zeta}{2}-U^2, \qquad q=- \frac{ \frac{3\zeta}{2}-U^2}{c-U}\, , 
\qquad q_{\pm} = - \frac{ \frac{3\zeta}{2}-U^2}{\pm \sqrt{\zeta}}  . 
\end{aligned} 
\end{equation} 
  The equation for $B$ is then 
 \begin{equation} \label{KdV}
B_t + c B_s + \varepsilon^2 \frac{c^2-1}{2(c-U)}  B_{sss} + \varepsilon^2 \frac{ \frac{3\zeta}{2}-U^2}{c-U}   \frac{B_s^2}{2}=0, \qquad 
c=c_{\pm}=U \pm \sqrt{\zeta}\, , 
   \end{equation}
 where we have implicitly assumed that $c$ has the label of $c_{\pm}$ for the left ($-$) and right ($+$) running waves. 
 The KdV equation for $a=B_s$ is
 \begin{equation} \label{KdV_a}
a_t + c a_s + \varepsilon^2 \frac{c^2-1}{2(c-U)}  a_{sss} + \varepsilon^2 \frac{ \frac{3\zeta}{2}-U^2}{c-U} \left( \frac{a^2}{2}\right)_s =0 \, , 
\qquad 
c=c_{\pm}=U \pm \sqrt{\zeta}\, . 
\end{equation}
In order to compare the accuracy of the equation \eqref{KdV}, we perform the simulation with the same initial conditions and parameters as in Figure~\ref{fig:simulations}, separating the solutions into the left- and right-running waves, \emph{i.e.}, taking the plus and minus signs in the corresponding equation \eqref{KdV}. The results for $a=B_s$ are presented on Figure~\ref{fig:KdV}. Note that the speed of propagation is computed quite well as compared to the left panel of Figure~\ref{fig:simulations}, \emph{i.e.}, solution of the full equation \eqref{eq_B}. However, the details of the evolution do not quite match, especially because of the contribution of short wavelength ripples in the KdV equation, which is supposed to be the long-wavelength approximation only. Still, we believe that it is useful to consider the KdV for the  propagation of the wave pattern in either direction.  
\begin{figure}[h]
\centering
\includegraphics[width=0.48\textwidth]{./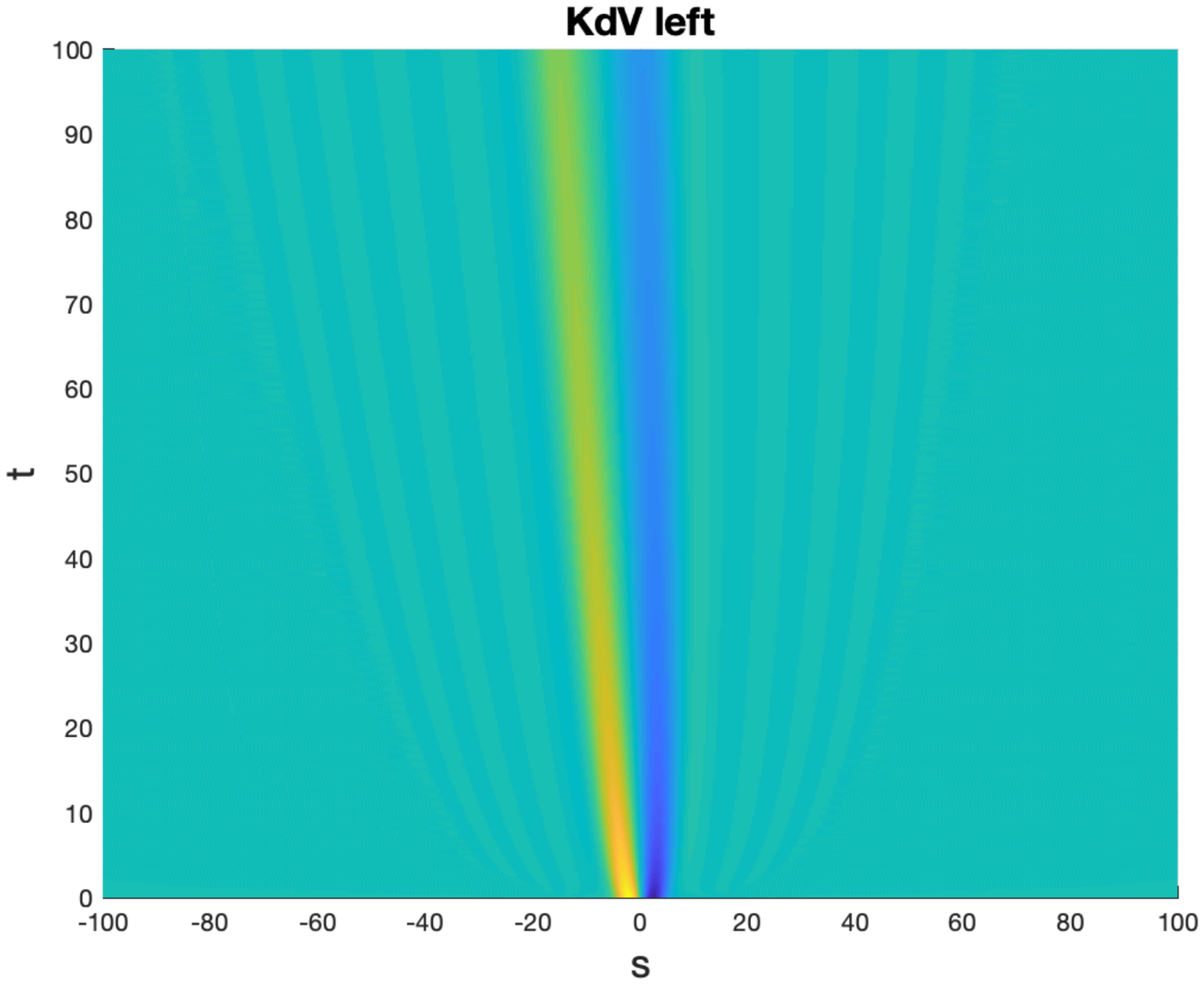}
\includegraphics[width=0.48\textwidth]{./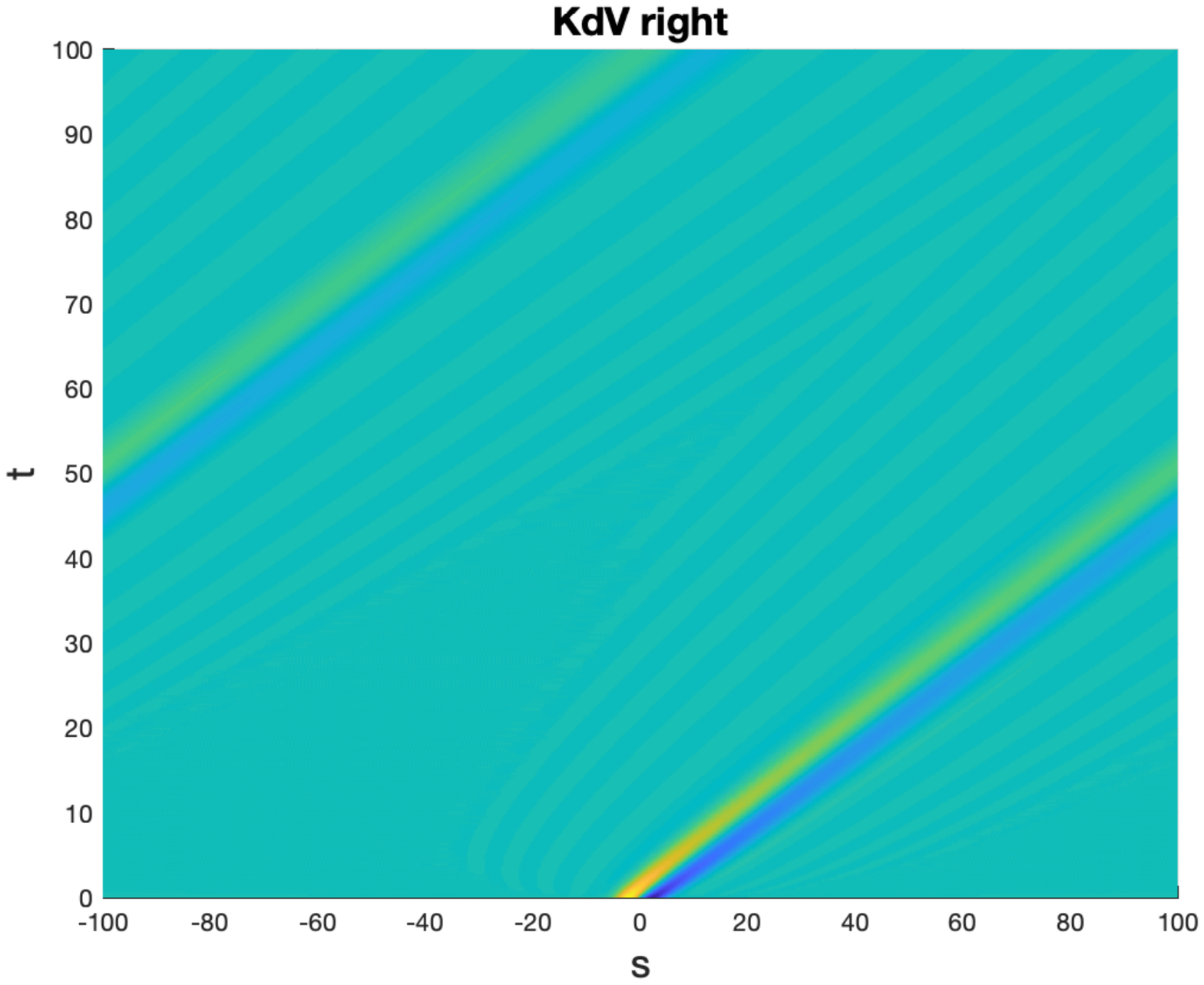}
\caption{Solutions for $a=B_s$ the left-running and right-running KdV equations \eqref{KdV} $a(s,t)$. Left panel: Left-running KdV equation,  corresponding to the $-$ sign in \eqref{KdV}. Right panel: Right-running KdV equation corresponding to the $+$ sign in \eqref{KdV}.   }
\label{fig:KdV}
\end{figure}
\section{Conclusions} 
In this paper, we have derived several novel results for the description of fluid with vorticity flowing in tubes with expandable walls. As our approach is variational in nature, it is applicable to the flow of essentially inviscid fluids. The friction terms can be introduced later in the model by using the Lagrange-d'Alembert's principle of external forces. We have intentionally avoided discussion of the friction terms here, as the nature of fluid friction in blood is very complex and is certainly beyond the scope of this paper. That being said, in order to  apply our theory successfully to the blood flow, the friction between the fluid and the walls will need to be introduced. This is an important feature of the practical applicability of the theory developed here and should be considered later. 

On a more theoretical side, in this manuscript we have concentrated on the purely Lagrangian description of the dynamics. An equivalent Hamiltonian formulation can be derived by constructing an appropriate Poisson bracket. Since the description through the back-to-labels map comes as the Euler-Lagrange equation, the Poisson bracket for the back-to-label description will be canonical. We have not touched upon the Hamiltonian description here to keep the paper concise; the Hamiltonian description will be considered in the follow-up paper. 

Another interesting direction of study, having both a theoretical and a practical value, is a more detailed incorporation of the dynamics of the media outside the tube. In our paper, we have modelled this media as simply exerting a constant pressure, but having no dynamics of its own. This may be a very good approximation for tubes conveying fluid when the outside media is an ideal gas kept at constant pressure. However, in reality, the dynamics of tubes submerged in a resisting media will cause a non-trivial response from the media beyond constant pressure, and thus affect the dynamics in a different ways. This response could be treated in the variational framework if, for example, the outside media is modelled either as an elastic body, or an ideal fluid. This interesting and challenging direction of study will be also considered in the future work. 

\section{Acknowledgements} 

We acknowledge inspiration from fruitful and productive discussions with D. D. Holm, F. Gay-Balmaz and T. S. Ratiu. 
The research of VP was partially supported by the NSERC Discovery Grant program and the University of Alberta. This work has also been made possible by the awarding of a James M Flaherty Visiting Professorship from the Ireland Canada University Foundation, with the assistance of the Government of Canada/avec l'appui du gouvernement du Canada.

\bibliographystyle{unsrt}
\bibliography{Garden-hoses}
\end{document}